\newtheorem{definition}{Definition}
\newtheorem{theorem}{Theorem}
\newtheorem{example}{Example}
\newtheorem{proposition}{Proposition}
\newtheorem{lemma}{Lemma}
\newcommand{\tree}{\mathcal{T}}
\newcommand{\nodes}{\mathcal{N}}
\newcommand{\edges}{\mathcal{E}}
\newcommand{\items}{\mathcal{G}}
\newcommand{\children}{\mathcal{C}}
\newcommand{\parent}{\mathcal{P}}
\newcommand{\leaves}{\mathcal{L}}
\newcommand{\internal}{\mathcal{I}}
\newcommand{\ancestors}{Anc}
\newcommand{\fair}{\mathcal{F}}
\newcommand{\mallocations}{\mathrm{\Pi}}
\newcommand{\allocations}{\mathcal{A}}
\date{}
\begin{document}

\title{\bfseries Multilevel Fair Allocation with Matroid-Rank Preferences}

\author{
Maxime Lucet, Nawal Benabbou, Aur\'elie Beynier, and Nicolas Maudet 
\\
LIP6, CNRS, Sorbonne Université\\
\texttt{\{firstname.lastname\}@lip6.fr}
}

\maketitle
\begin{abstract}
    We introduce the concept of multilevel fair allocation of resources with tree-structured hierarchical relations among agents.
    While at each level it is possible to consider the problem locally as an allocation of an agent to its children, the multilevel allocation can be seen as a trace capturing the fact that the process is iterated until the leaves of the tree. In principle, each intermediary node may have its own local allocation mechanism.
    The main challenge is then to design algorithms which can retain good fairness and efficiency properties.
    In this paper, we propose two original algorithms under the assumption that leaves of the tree have matroid-rank utility functions and the utility of any internal node is the sum of the utilities of its children. The first one is a generic polynomial-time sequential algorithm that comes with theoretical guarantees in terms of efficiency and fairness. It operates in a top-down fashion -- as commonly observed in real-world applications -- and is compatible with various local algorithms. The second one extends the recently proposed General Yankee Swap to the multilevel setting. This extension comes with efficiency guarantees only, but we show that it preserves excellent fairness properties in practice.
    We also show that both algorithms can be extended to handle chores for some fairness criteria. 
\end{abstract}

\section{Introduction}

A multiagent resource allocation problem consists in determining a fair and/or efficient distribution of a set of items among a set of agents \cite{Brandt2016}. Most existing work assumes that items are allocated directly to the agents or groups who will use them. However, in many real-world settings, agents are embedded within hierarchical organizations: groups nested within larger groups, forming a multilevel hierarchical organization. Such a hierarchy can be naturally modeled as a directed tree, where each node is responsible for allocating the bundle of items it receives to its children.

Consider the allocation of offices to researchers within a university exhibiting a natural hierarchical structure: the university is composed of departments, each containing laboratories, which in turn comprise research groups made up of individual researchers. The allocation process typically proceeds in stages: offices are first distributed among departments, then among laboratories within each department, then by research groups within each laboratory, and finally among individual researchers. In such a setting, fairness should be enforced within each hierarchical level (e.g., among departments or among research groups within a laboratory), rather than across levels, since entities at different levels represent incomparable units. This example extends to many allocation problems across territories and hierarchical organizations. 

Applications can arise even in the absence of an explicitly defined hierarchy. For instance, when allocating jobs to unemployed individuals characterized by gender, age, and education, one may first focus on fairness across gender (e.g., putting more weight on women to address underrepresentation), then across education for women (to promote more qualified women) and across age for men by assigning equal weight to each age group (to balance the representation). Such hierarchical preferences can be naturally represented in the multilevel framework as a directed tree in which the nodes are weighted. Note that this last example is closely related to real-world affirmative action policies, such as the reservation system in India\footnote{\url{https://en.wikipedia.org/wiki/Reservation_in_India}}\footnote{\url{https://www.pib.gov.in/Pressreleaseshare.aspx?PRID=1564231&reg=3&lang=2}}, which aims to address historical inequalities. In this system, a fixed fraction of positions is reserved for individuals from disadvantaged groups (e.g., Scheduled Castes, Scheduled Tribes, and Other Backward Classes), with further refinements sometimes introduced to prioritize more marginalized subgroups. In addition, horizontal reservations (e.g., for women or persons with disabilities) apply across these categories. Although the resulting structure is not strictly hierarchical, it can be naturally represented within our framework as a weighted tree capturing successive layers of prioritization.

Different assumptions can be made regarding the utility of the nodes (agents) of the tree. For internal nodes, it could be (i) that agents only care about the bundle they receive and are oblivious of the subsequent allocation of those items; (ii) that agents have preferences over the properties of the allocation to their children; or (iii) that agents have preferences that depend on the actual utilities of their children. 
In this paper, we adopt assumption (iii) which reflects the idea that internal entities do not act as self-interested agents, but rather as representative bodies for their constituents. 
For example, a department’s objective is to maximize the aggregate satisfaction of the labs it contains, and similarly a laboratory aims to maximize the satisfaction of its research groups. In this paper, we model each internal node as maximizing the {\em utilitarian social welfare}, that is, its utility is defined as the sum of the utilities of its children. We acknowledge that, in practice, some intermediate entities (internal nodes) may display a degree of self-interest; for instance, a laboratory might wish to retain additional space to support its own development. Nevertheless, modeling these nodes as purely representative agents, without an intrinsic utility of their own, provides a natural first approach. Moreover, this assumption already raises notable algorithmic challenges when allocations are computed using a top-down approach: an internal agent must allocate items to satisfy its children, even though their utilities are not yet known, as these depend on subsequent allocations at lower levels, and so on. 

Finally, we adopt \textit{matroid-rank} utilities for the leaves, a well-established class of valuations that succinctly capture preferences for diversity over items. For example, when allocating offices to researchers, research groups can be modeled as the leaves of the hierarchy, each consisting of a set of researchers. The utility of a group is then defined as the size of a maximum matching in a bipartite graph: one side represents the researchers in the group, the other the offices allocated to the group, and an edge connects a researcher to an office if the researcher finds the office acceptable. This matching captures how well the allocated offices satisfy the preferences of the group’s members. Such utilities can be modeled by matroid-rank utilities. Other examples include students favoring curricula spanning multiple disciplines, or allocation problems with diversity constraints which may prohibit exceeding category quotas 
\cite{Benabbou_et_al_2019}. Unlike additive utilities, matroid-rank valuations have limited expressive power over singletons but gain richness through their submodular structure, which captures complementarities and diminishing returns. A remarkable property of matroid rank domains is that they allow the existence of allocations satisfying simultaneously efficiency and fairness, as observed in several papers \cite{Benabbou_et_al2021, Babaioff_et_al2021, Viswanathan_Zick2023b, Viswanathan_Zick2023a}. In this paper, we show that this desirable property can be extended to our multilevel setting.

We now present two in-depth examples illustrating how the multilevel structure, and matroid-rank preferences at the leaves can coexist in real-world applications. 

\subsection{COVID-19 vaccine allocation}

A real-world illustration of our model is the problem of fairly distributing vaccines during the COVID-19 pandemic. Ensuring territorial fairness mattered not only for moral reasons, since without coordination richer countries could secure more doses than needed at the expense of poorer ones, but also for epidemiological reasons, as global vaccination is key to preventing resurgence. Similar concerns arose at the national level: in France, for example, some members of the Parliament pointed out disparities in vaccine allocation across regions, with certain areas receiving significantly more doses than others\footnote{\url{https://questions.assemblee-nationale.fr/q15/15-36657QE.htm}}. Such challenges of fair territorial distribution can be naturally addressed within the framework we propose, which allows fairness considerations to be incorporated across multiple administrative levels. In particular, the allocation should be fair between countries, for instance proportionally to population, and recursively fair across successive territorial subdivisions within each country, ultimately down to the level of hospitals or vaccination centers.

Moreover, preferences of hospitals or vaccination centers can be naturally modeled using matroid-rank functions. For instance, consider a setting in which a fixed stock of vaccine doses must be distributed among hospitals or vaccination centers worldwide. Each hospital is subject to a storage capacity constraint, limiting the total number of doses it can safely hold. In addition, hospitals may approve or disapprove of specific vaccines; during the COVID-19 pandemic, for example, some countries declined or restricted the use of vaccine doses from certain foreign suppliers. Consequently, each hospital seeks to receive as many doses as possible among the vaccines it approves, while respecting its capacity constraints.

\subsection{Course allocation}

Another real-world illustration and common motivation for fair allocation problems with matroid-rank valuations is the assignment of courses to students \cite{Benabbou_et_al2021, Viswanathan_Zick2023a}. In such settings, students may approve or disapprove of courses, but their preferences are more structured: each student has a capacity constraint (e.g., a maximum number of courses per semester), some courses may be incompatible (e.g., due to overlapping schedules), and students may have additional structured requirements (e.g., wishing to take at most a certain number of mathematics courses). These preferences can naturally be modeled using matroid-rank valuations. Furthermore, \cite{Bissias_et_al2025} show that many real-world students exhibit preferences that are indeed matroid-rank valuations. Finally, \cite{Bissias_et_al2025} also shows that even when preferences are not exactly matroid-rank, they often retain much of the underlying matroid-rank structure, allowing algorithms designed for matroid-rank valuations to maintain strong empirical performance. 

In this context, the university may also wish to ensure fairness or implement prioritization among students, leading to an implicit hierarchical structure. For instance, it may prioritize students closer to graduation over first-year students, enforce strict gender parity between male and female students, and incorporate a merit-based preference that increases the chances for higher-performing students to obtain desired courses. Such requirements can be captured within the framework we introduce in this paper.

\subsection{Related work}
Recently, \emph{group fairness} has become an active topic in fair division. For example, \cite{Aziz_Rey_2019,Conitzer_et_al2019} studied group fairness notions requiring fairness to be achieved among \emph{any} group, and introduced different relaxations of group envy-freeness. When groups are \emph{pre-defined}, several notions and algorithms have also been proposed \cite{Benabbou_et_al_2019, Chakraborty_et_al_2020,Kyropoulou_2019, Aleksandrov_Walsh_2018, GrossHumbert_et_al2023}. 

We focus on leaves with matroid-rank valuations. Several papers study classical (monolevel) allocation problems where agents have such utilities \cite{Benabbou_et_al2021, Babaioff_et_al2021, Viswanathan_Zick2023a}. In particular, our second algorithm extends the General Yankee Swap \cite{Viswanathan_Zick2023b} to our multilevel setting.

Also closely related in spirit to our contribution is the work of \cite{scarlett_et_al2023,Bu_et_al2024,Aggarwal_et_al2024} which seeks to reconcile the perspectives of groups and individual agents. 
More precisely, \cite{scarlett_et_al2023} study individual and group envy-freeness in settings where groups' valuations depend on the valuations of their constituent agents, and they propose algorithms that provide guarantees at both levels. By contrast,
\cite{Bu_et_al2024}  argue that agents at different levels may have distinct preferences, notably that groups' preferences do not necessarily aggregate the valuations of their constituent agents (while we do assume that nodes aggregate the utility of their children in this paper). Lastly, \cite{Aggarwal_et_al2024} studies a bilevel allocation problem, motivated by the way some food charities operate. In contrast with our paper, their paper focuses on auction mechanisms. All of these papers remain limited to two-level settings. 

Furthermore, our setting resembles the recent multilevel apportionment model from \cite{Schmidt_et_al2025}. Indeed, their hierarchical model is similar to ours, representing the hierarchy through a tree. However, the scope of the paper is specifically on apportionment methods, a specific subfield of resource allocation.

Our model also shares similarities with the trend of work on local fairness \cite{Abebe_et_al_2017,Beynier_et_al2019, BREDERECK_2022}, as the hierarchical structure induces some sort of visibility constraint (e.g., only labs affiliated to the same department can compare their situations). 
 
\subsection{Contributions}
In this paper, we formally introduce multilevel fair allocation problems and propose two algorithms to address them. We assume that the leaves of the tree (e.g., research groups or unemployed individuals) have matroid rank utility functions, and that each internal node aims to maximize the utilitarian social welfare (i.e., its utility is equal to the sum of its children’s utilities). The first algorithm introduced in Section~\ref{section: sma} is a top-down, polynomial-time method satisfying several fairness and efficiency properties. However, as discussed in Section~\ref{section: simulations}, it can be slow in practice. Hence, we propose a faster algorithm in Section~\ref{section: mgys} extending the General Yankee Swap \cite{Viswanathan_Zick2023b}, which ensures efficiency and exhibits strong empirical fairness. Moreover, we show in Appendix~\ref{appendix: chores} that both algorithms can be extended to handle chores for some fairness criteria, following the techniques of \cite{Viswanathan_Zick2023b}.

\section{Model} \label{section: model}

In this paper, we consider allocation problems where a set of $m$ items, denoted by $\items = \{g_1, \ldots, g_m\}$, must be distributed among agents organized in a hierarchical structure. In the main body of the paper, we consider goods to be allocated among agents, meaning that they have non-negative marginal utilities for these items, while in the appendix, we study the case where the items are chores.

\vspace{0.1cm}

\noindent \textbf{Hierarchical structure.} We consider a multilevel allocation problem represented by an arborescence (i.e., a directed rooted tree in which all edges point away from the root), denoted by $\tree = (\nodes, \edges)$. Here, $\nodes = \{1,\ldots,n\}$ is the set of nodes representing agents, and $\edges$ is the set of directed edges that represent the hierarchical relationships between them. We assume that the nodes are indexed according to a topological ordering of $\tree$, so that node $1$ is the root. Moreover, each node $i \in \nodes$ is associated with a weight $w_i > 0$ representing the entitlement of agent $i$, as required by some fairness criteria (the unweighted case can be treated by setting equal weights between nodes with the same parent). 

Let $\children(i)$ denote the set of children of node $i$, defined as $\children(i) = \{j \in \nodes : (i,j) \in \edges\}$. Let $\parent(i)$ denote the parent of $i$, i.e., the unique node such that $(\parent(i), i) \in \edges$. We denote by $\ancestors(i)$ the set of ancestors of $i$, that is, the nodes on the unique path from the root $1$ to $i$, excluding $i$. For any node $i$, let $\tree_i = (\nodes_i, \edges_i)$ denote the subtree of $\tree$ rooted at $i$, consisting of all nodes and edges belonging to paths that start at $i$. Let $\leaves(i)$ be the set of leaves of $\tree_i$, formally defined as $\leaves(i) = \{j \in \nodes_i : \children(j) = \emptyset\}$. Let $\internal(i)$ denote the set of internal nodes of $\tree_i$, defined by $\internal(i) = \nodes_i \setminus \leaves(i)$. In the particular case of the root, we simply write $\internal := \internal(1)$ and $\leaves := \leaves(1)$ to denote, respectively, the sets of internal nodes and leaves of the entire tree.

\medskip

\noindent \textbf{Allocations.} Now we introduce relevant allocation types:

\begin{definition}[multilevel allocation]
$\pi : \nodes \rightarrow 2^\items$ is a \emph{multilevel allocation} if it satisfies the following properties:
\begin{enumerate}
    \item $\pi(1) = \items$,
    \item $\pi(i) \supseteq \bigcup_{j \in \children(i)} \pi(j)$ for all $i \in \internal$,
    \item $\pi(i) \bigcap \pi(j)  = \emptyset$ for all $i, j \in \nodes$ such that $\parent(i) =  \parent(j)$, 
\end{enumerate}
where $\pi(i)$ denotes the bundle of items received by node $i \in \nodes$ under $\pi$. 
\end{definition}

\noindent We require that the root owns all items (i.e. $\pi(1) = \items$) only to ensure that none are discarded a priori. Note that we do not impose $\pi(i) = \cup_{j \in \children(i)} \pi(j)$ for all $i  \in  \internal$ allowing nodes not to allocate all their items, particularly to satisfy their fairness criterion. Instead, we only impose  $\pi(i) \supseteq \cup_{j \in \children(i)} \pi(j)$ to make the model as general as possible; for example, it captures allocation problems with charity \cite{Caragiannis_et_al2019b}, and it aligns with most studies on matroid rank valuations \cite{Benabbou_et_al2021, Viswanathan_Zick2023b}, which typically restrict attention to non-redundant allocations (i.e., only items with positive marginal utility are assigned). Finally, we require that each internal node allocate each of its items to at most one child. Hereafter, the set of all multilevel allocations is denoted by $\mallocations$.

\begin{definition}[restricted multilevel allocation] For a multilevel allocation $\pi   \in   \mallocations$ and a set of nodes $N   \subseteq  \nodes$, the \emph{restriction of $\pi $ to $N$} is denoted by $\pi |_N$ and is defined by $   \pi |_N =  (\pi(i))_{i \in N}$.
\end{definition}

\begin{definition}[local allocation] Given a set of nodes $N \subseteq  \nodes$ and a set of items $S  \subseteq  \items$, $A  :  N  \rightarrow  2^S$ is a \emph{local allocation} if \ $\bigcup_{i \in N} A(i)  \subseteq  S$ and $A(i) \bigcap A(j)  =  \emptyset$ for all $i, j  \in  N$.
\end{definition}

\noindent A local allocation is therefore simply a standard allocation (in the sense that the tree structure is ignored), where each item in $S$ is assigned to at most one agent in 
$N$, and some items may remain unallocated. For any $N  \subseteq  \nodes$ and $S  \subseteq  \items$, let $\mathcal{A}^S_N$ denote the set of corresponding local allocations. Note that, for any multilevel allocation $\pi \in \mallocations$ and any node $i \in \internal$, the restricted allocation $\pi |_{\children(i)}$ is a local allocation belonging to $\allocations^{\pi(i)}_{\children(i)}$. Indeed, node $i$ can only distribute to its children the goods it has received, namely $\pi(i)$, and each good can be assigned to at most one child, since in a multilevel allocation no good can be given to more than one agent. The notion of a local allocation will be particularly useful for comparing the different ways a node can distribute goods to its children, especially in terms of fairness and efficiency.

\medskip

\noindent \textbf{Utility model.} Let $v_i : \mallocations \rightarrow \mathbb{R}_{\geq 0}$ be the utility function of node $i  \in  \nodes$, and let $v =(v_i)_{i \in \nodes}$. Note that the utility function is defined over the set of multilevel allocations (rather than over bundles of items), since a node’s utility may depend on how goods are allocated to its children, or to any other node in the tree. In this paper, for any internal node $i \in \internal$ and any multilevel allocation $\pi \in \mallocations$, $v_i(\pi)$ quantifies some concept of overall welfare derived from the children of $i$ from $\pi $.  We focus here on the utilitarian social welfare, i.e. $$
v_i(\pi) := \sum_{j \in \children(i)} v_{j}(\pi).
$$ 
Hence, we assume that internal nodes have additive utilities over their children. This assumption is realistic in many applications where internal nodes represent entities whose utility derives solely from the satisfaction of their members. Note that, by linearity of summation, $v_i(\pi)$ can be rewritten as $$v_i(\pi) = \sum_{x \in \leaves(i)} v_x(\pi)$$ where the sum ranges over the leaves of the subtree $\tree_i$. In certain applications, we may consider more complex models that incorporate a personal component in addition to the dependency on the children. As already discussed, we leave this extension for future work.

In contrast, the utility of each leaf $x \in \leaves$ only depends on the bundle of items it receives (since it has no children). Hence there exists a function $u_x : 2^\items \rightarrow \mathbb{R}_{\geq 0}$ such that $v_x(\pi) = u_x(\pi(x))$ for any multilevel allocation $\pi \in \mallocations$, and we assume that $u_x(\emptyset) = 0$. We also assume that $u_x$ belongs to the class of matroid rank functions, also known as binary submodular utilities, defined as follows:

\begin{definition}[matroid rank function] \label{def: mrf}
    A set function $u: 2^\items \rightarrow \mathbb{R}_{\geq 0}$ is a \emph{Matroid Rank Function} (MRF) if it satisfies:
    \begin{enumerate}
        \item Monotonicity: $u(S) \leq u(T)$  for any $S \subseteq T \subseteq \items$, 
        \item Submodularity: $\Delta^u(S, g) \geq \Delta^u(T, g)$ for any $S  \subseteq T \subseteq \items$ and any $g \in \items \setminus T$, 
        \item Binary marginal gain: $\Delta^u(S, g)  \in \{0, 1\}$ for any $S \subseteq \items$ and any $g \in \items \setminus S$,
    \end{enumerate}
    where $\Delta^u(S, g) = u(S \cup \{g\}) - u(S)$ is the marginal gain of $g$ given $S$, for any $S \subseteq \items$ and any $g \in \items \setminus S$.
\end{definition}

\begin{example} \label{example: model}
    A university has to allocate 6 offices $\items=\{a,b,c,d,e,f\}$ to two departments (Humanities and Computer Science), which in turn will distribute them among their respective labs (see Fig.~\ref{fig:exampletree}). 

    
    \begin{figure}[hbtp]
    \centering
    \begin{tikzpicture}[
            level distance=12mm,
            every node/.style={rectangle, rounded corners=2pt, draw, inner sep=2pt, font=\footnotesize},
            level 1/.style={sibling distance=43mm},
            level 2/.style={sibling distance=21mm},
            level 3/.style={sibling distance=6mm},
            edge from parent/.style={draw, -stealth}
        ]
        
        \node (1) {1: University}
            child { node (2) {2: DeptH}
                child { node (4) {4: LabH1} }
                child { node (5) {5: LabH2} }
            }
            child { node (3) {3: DeptCS}
                child { node (6) {6: LabCS1} }
                child { node (7) {7: LabCS2} }
            };
        
    \end{tikzpicture}
    \caption{The hierarchical structure of a university.}
    \label{fig:exampletree}
    \end{figure}


    \noindent The labs are the leaves. Their matroid-rank utilities are as follows: LabH1 gets utility~1 upon receiving any office; LabH2 behaves similarly, except for ~$a$; LabCS1 and LabCS2 have binary additive valuations and approve all items. 
    \noindent Suppose that DeptH and DeptCS receive $\{a,b\}$ and $\{c,d,e,f\}$ respectively, and then assign their items so that LabH1, LabH2, and LabCS1 receive  $\{a\}$, $\{b\}$, and $\{c,d,e,f\}$, respectively, leaving LabCS2 with an empty bundle. This multilevel allocation, denoted by $\pi$, verifies: 

    \begin{table}[h!]
        \centering
        \begin{tabular}{c|cccccccc}
            \hline 
            Node $i$ & 1 & 2 & 3 & 4 & 5 & 6 & 7 \\
            \hline 
            $\pi(i)$ & $\{a, b, c, d, e, f\}$ & $\{a, b\}$ & $\{c, d, e, f\}$ & $\{a\}$ & $\{b\}$ & $\{cdef\}$ & $\emptyset$ \\ \hline 
            $v_i(\pi)$ & 6 & 2 & 4 & 1 & 1 & 4 & 0 \\
            \hline 
        \end{tabular}
    \end{table}

    One might ask whether the allocation performed by the university is efficient and fair, in a certain sense, with respect to the departments, by considering the restricted allocation $\pi|_{\children(1)}$ (which is the local allocation in $\allocations^\items_{\children(1)}$ defined by $\pi|_{\children(1)}(2)= \{a,b\}$ and $\pi|_{\children(3)}(3)= \{c,d,e,f\}$).

\end{example}

We aim to compute a multilevel allocation that simultaneously ensures high welfare and satisfies a fairness criterion with respect to the children of each (internal) node.

\medskip

\noindent \textbf{Efficiency criteria.}  We now introduce the efficiency notions considered in this paper.

\begin{definition}[non-redundancy w.r.t. $v$] \label{definition: non-redundancy} 
Multilevel allocation $\pi \in \mallocations$ is \emph{non-redundant} if 
$v_i(\pi) - v_i(\pi_{-g}) > 0$ for all $i \in \nodes$ and all $g \in \pi(i)$, where $\pi_{-g}$ denotes the multilevel allocation obtained from $\pi$ by removing $g$ from all bundles $\pi(i)$, $i \in \nodes$, that include $g$.
\end{definition}

\begin{definition}[multilevel utilitarian optimality w.r.t. $v$] \label{def: multilevel usw} Given any multilevel allocation $\pi  \in  \mallocations$ and any internal node $i \in \internal$, $\pi \vert_{\children(i)}$ is said to be \emph{utilitarian optimal w.r.t. $v$} if and only if for all $\pi' \in \mallocations$ such that $\pi'(i) = \pi(i)$, we have:
$$
\sum_{j \in \children(i)} v_{j}(\pi) \ge \sum_{j \in \children(i)} v_{j}(\pi')
$$

\noindent Then $\pi \in \mallocations$ is said to be \emph{multilevel utilitarian optimal w.r.t. $v$} if $\pi \vert_{\children(i)}$ is utilitarian optimal w.r.t. $v$ for all $i\in \internal$.
\end{definition}

Note that a multilevel utilitarian optimal allocation can be computed in polynomial time, thanks to the additivity of $v_i$, for all $i \in \internal$. Indeed, it suffices to find a local allocation $A  \in \allocations^\items_{\leaves}$ that maximizes $\sum_{x \in \leaves} u_x(A(x))$, and then define $\pi(x)=A(x)$ for all leaves $x \in \leaves$ and $\pi(i) = \cup_{x \in \leaves(i)} A(x)$ for each internal node $i\in \internal$. Since each $u_x$ is a matroid rank function, $A$ can be computed in polynomial time: this problem was already addressed in fair allocation by previous papers \cite{Benabbou_et_al2021, Babaioff_et_al2021}, which show that it can be solved in polynomial time by reducing it to a \textit{matroid intersection problem} \cite{Chakrabarty_et_al2019}. In the proof of Theorem~\ref{theorem: sma terminates}, we show that it can be solved even faster using the \textit{matroid partition problem}, whose best available algorithm has lower complexity than the matroid intersection problem \cite{Terao2025}. Nevertheless, we emphasize that a  multilevel utilitarian allocation is not necessarily fair among the children of each node, as illustrated later in Example \ref{example2} (after the fairness notions are introduced).

\medskip

\noindent {\bf Fairness criteria.} We consider fairness criteria comparing allocations through utility vectors. In this paper, we consider a set $\fair$ of four well-known fairness criteria that enable the comparison of any two utility vectors, thereby inducing a total order $\succeq$ over the possible local allocations: Lorenz-dominance, maximum weighted Nash social welfare, weighted leximin, and maximum weighted p-means welfare. We define a node-specific fairness criterion $\Psi_i$ for each internal node $i \in \internal$. In our setting, this amounts to comparing utility vectors of local allocations at node $i$.  Specifically, given a multilevel allocation $\pi \in \mallocations$, the utility vector of $\pi \vert_{\children(i)}$ is defined by $\vv{v_i(\pi \vert_{\children(i)})} = (v_{i_1}(\pi), v_{i_2}(\pi), \ldots, v_{i_p}(\pi))$ where $i_1, \ldots, i_p$ are the children of $i$, ordered by their index. Given two multilevel allocations $\pi , \pi' \in \mallocations$, we say that $\pi \vert_{\children(i)}$ is fairer w.r.t. $\Psi_i$ and $v$ than $\pi ' \vert_{\children(i)}$ if $$\vv{v_i(\pi \vert_{\children(i)})} \succeq_{\Psi_i} \vv{v_i(\pi' \vert_{\children(i)})}$$ also denoted by $\pi \vert_{\children(i)} \succeq_{\Psi_i}^v \pi' \vert_{\children(i)}$ for brevity. Let $\Psi=(\Psi_i)_{i \in \internal}$. 
Now we define the associated multilevel fairness notion. 

\begin{definition}[multilevel $\Psi$-maximizing w.r.t. $v$]  \label{def: multilevel psi} Given any multilevel allocation $\pi \in \mallocations$ and any internal node $i \in \internal$, $\pi \vert_{\children(i)}$ is said to be \emph{$\Psi_i$-maximizing w.r.t. $v$} if for all $\pi' \in \mallocations$ such that $\pi'(i) = \pi(i)$, we have:
$$
\pi \vert_{\children(i)} \succeq_{\Psi_i}^v \pi' \vert_{\children(i)}
$$

\noindent Then $\pi \in \mallocations$ is said to be \emph{multilevel $\Psi$-maximizing w.r.t. $v$} if $\pi \vert_{\children(i)}$ is $\Psi_i$-maximizing w.r.t. $v$ for all $i \in \internal$.
\end{definition}

\noindent Below are the definitions of the four fairness notions considered in the paper, adapted to our multilevel setting.

\begin{definition}[Lorenz-dominance w.r.t. $v$] Given a multilevel allocation $\pi \in \mallocations$ and an internal node $i \in \internal$ whose children are $\{i_1,\ldots, i_p\}$, $\pi \vert_{\children(i)}$ is \emph{Lorenz-dominating w.r.t. $v$} if for all $\pi' \in \mallocations$ such that $\pi'(i) = \pi(i)$, we have:
$$ \forall k \in \{1,\ldots, p\}, 
\sum_{t=1}^k \vv{s(\pi|_{\children(i)})}_t \geq \sum_{t=1}^k \vv{s(\pi'|_{\children(i)})}_t
$$
where $\vv{s(\cdot)}$ is the vector $\vv{v_i(\cdot)}$ sorted in increasing order, and $\vv{s(\cdot)}_t$ is the $t^{\text{th}}$ component of $\vv{s(\cdot)}$ for any $t \in \{1, \ldots, p\}$.
\end{definition}

\begin{definition}[maximum weighted Nash social welfare w.r.t. $v$] Given a multilevel allocation $\pi \in \mallocations$ and an internal node $i \in \internal$, $\pi \vert_{\children(i)}$ \emph{maximizes weighted Nash social welfare w.r.t. $v$} if it minimizes the number of children with zero utility, and subject to that, satisfies for all $\pi' \in \mallocations$ such that $\pi'(i) = \pi(i)$:  
$$
\prod_{j \in \children(i)} v_{j}(\pi)^{w_{j}} \geq \prod_{j \in \children(i)} v_{j}(\pi')^{w_{j}}
$$
\end{definition}

\begin{definition}[weighted leximin w.r.t. $v$] Given a multilevel allocation $\pi \in \mallocations$ and an internal node $i \in \internal$ whose children are $\{i_1,\ldots, i_p\}$, $\pi \vert_{\children(i)}$ is \emph{weighted leximin w.r.t. $v$} if there exists no $\pi' \in \mallocations$ with $\pi'(i) = \pi(i)$ such that:
$$
\vv{e(\pi' \vert_{\children(i)})} \succ_{lex} \vv{e(\pi \vert_{\children(i)})}
$$
where $\vv{e(\cdot)}$ is the vector $(\frac{\vv{v_i(\cdot)}_1}{w_{i_1}}, \ldots, \frac{\vv{v_i(\cdot)}_p}{w_{i_p}})$ sorted in increasing order, and $\succ_{lex}$ is the lexicographic dominance\footnote{Given two vectors $x, y \in \mathbb{R}^c$ for some positive integer $c$, $x$ \emph{lexicographically dominates} $y$ if there exists $k \in \{1, \ldots, c\}$ such that $x_j = y_j$ for all $j \in \{1, \ldots, k-1\}$ and $x_k > y_k$.}. 
\end{definition}

\begin{definition}[maximum weighted $p$-means welfare w.r.t. $v$] Given a multilevel allocation $\pi \in \mallocations$ and an internal node $i \in \internal$, $\pi \vert_{\children(i)}$ \emph{maximizes weighted $p$-means welfare w.r.t. $v$} for some $p \leq 1$ if it minimizes the number of children with zero utility, and subject to that satisfies for all $\pi' \in \mallocations$ such that $\pi'(i) = \pi(i)$:
$$
\Big (\sum_{j \in \children(i)} w_{j} v_{j}(\pi)^p \Big )^{\frac{1}{p}} \geq \Big (\sum_{j  \in \children(i)}  w_{j}  v_{j}(\pi')^p \Big )^{\frac{1}{p}}$$
\end{definition}

Interestingly, these four notions are compatible with utilitarian social welfare optimization when the valuation functions are known to be MRF \cite{Benabbou_et_al2021, Viswanathan_Zick2023b}. A key step in our first approach is to define suitable valuation functions for internal nodes that satisfy this property throughout the execution of the algorithm; these will be denoted by $\hat{v}_i$, $i\in \nodes$, in the following section. 

\medskip

We now illustrate the potential lack of fairness among the children of a node under multilevel utilitarian allocations, highlighting the need to design algorithms that account for both welfare optimization and fairness.

\begin{example} \label{example2}
   We use the same instance than Example~\ref{example: model}. Let us compare the following multilevel allocations (we only report the bundles of the labs, as the bundles of the departments can then be obtained by taking the union of the bundles of their constituent labs): 

\vspace{0.1cm}

\noindent
\begin{center}
\scalebox{0.9}{
$
\begin{array}{c|ccccc} \hline
\text{Alloc} & LabH1& LabH2 & LabCS1 & LabCS2 & \text{utilities} \\  \hline
\pi & \{a\} & \{b\} & \{c,d,e,f\} & \emptyset & \langle 1,1,4,0\rangle \\
\pi' & \{b\} & \{a\} & \{c,d,e,f\} & \emptyset & \langle 1,0,4,0\rangle \\
\pi'' & \{a\} &\{b\} & \{c,d\} & \{e,f\} & \langle 1,1,2,2 \rangle \\ \hline
\end{array}
$
}
\end{center}
\vspace{0.1cm}

\noindent First, notice that $\pi'$ is not utilitarian optimal at $DeptH$, as swapping $a$ and $b$ between $LabH1$ and $LabH2$ would yield utility 2 instead of 1. Then, it can be shown that 
$\pi$ is multilevel utilitarian optimal but not fair at node DeptCS considering Lorenz-dominance w.r.t. $v$ as it is dominated by $\pi''$. Finally, $\pi''$ is both multilevel utilitarian optimal and multilevel Lorenz-dominating w.r.t. $v$.

\end{example}

\section{Sequential Multilevel Algorithm} \label{section: sma}

We now propose a sequential multilevel algorithm that follows a top-down approach, drawing inspiration from real-life hierarchical systems (e.g. in vaccine rollout systems, doses are centrally purchased and then gradually distributed down the administrative hierarchy, from the state to regions, from regions to departments, and from departments to municipalities). Starting at the root, which is given the set $\items$, we compute a local allocation to the children of the root. Then, for each of its children, we compute a local allocation of the bundle it received from the root to its own children, and so on until the leaves. We will show that such an approach can construct a multilevel $\Psi$-maximizing and multilevel utilitarian optimal allocation w.r.t. $v$. 

\medskip

\noindent \textbf{Estimated utility.} In order to achieve efficiency/fairness at every node, each node should take into account the utilities of its children when assigning them bundles of items. However, in our setting, each internal agent $i \in \internal$ has a utility function $v_i$ which is defined from the set of possible multilevel allocations, not from the set of all bundles. Indeed, the utility of each child not only depends on the bundle it receives, but also on how it will be allocated to its own children (which also depends on the fairness notions considered). As a result, within a top-down procedure, the utility of an internal node is indeterminate until the leaves are reached, which underscores a key difficulty inherent in the top-down approach. To remedy this issue, our algorithm associates to each node $i \in \nodes$ a utility function $\hat{v}_i : 2^\items \rightarrow \mathbb{R}_{\geq 0}$ estimating the utility $v_i(\pi^*)$ derived by $i$ from the multilevel allocation $\pi^* \in \mallocations$ finally returned by the algorithm. This estimated utility function is used as a proxy of $v_i$. For any $S \subseteq \items$, the value $\hat{v}_i(S)$ is defined using the leaves of $\tree_i$ as follows:
$$
\hat{v}_i(S) :=\left\{
  \begin{array}{ll}
    \max_{A \in \allocations^S_{\leaves(i)}} \sum_{x \in \leaves(i)} u_x(A(x)), & \textit{if } i \in \internal \\
    u_i(S), & \textit{otherwise}.
  \end{array}
\right.
$$
We show in the proof of Theorem \ref{theorem: sma terminates} that $\hat{v}_i(S)$ can indeed be computed efficiently. Now to ensure that $\pi^*$ is multilevel utilitarian optimal and multilevel $\Psi$-maximizing w.r.t. $v$, our algorithm constructs a multilevel allocation that is multilevel utilitarian optimal and multilevel $\Psi$-maximizing w.r.t. $\hat{v}$ by sequentially identifying, at each $i \in \internal$, a local allocation that is utilitarian optimal and $\Psi_i$-maximizing w.r.t. $\hat{v}$. We give the definitions of these concepts based on $\hat{v}$ in the following. 

\begin{definition}[multilevel utilitarian optimality w.r.t. $\hat{v}$] \label{def: multilevel usw v hat} Given a bundle of items $S \subseteq \items$ and any internal node $i\in\internal$, local allocation $A \in \allocations^S_{\children(i)}$ is \emph{utilitarian optimal w.r.t. $\hat{v}$} if, for all $A' \in \allocations^S_{\children(i)}$, we have:
$$
\sum_{j \in \children(i)} \hat{v}_j(A(j)) \geq \sum_{j \in \children(i)} \hat{v}_j(A'(j))
$$

\noindent Multilevel allocation $\pi \in \mallocations$ is \emph{multilevel utilitarian optimal w.r.t. $\hat{v}$} if, for all internal nodes $i \in \internal$, 
$\pi \vert_{\children(i)}$ is utilitarian optimal w.r.t. $\hat{v}$ for $S = \pi(i)$.
\end{definition}

Given a set of items $S \subseteq \items$ and a set of nodes $N \subseteq \nodes$, we define the estimated utility vector of a local allocation $A \in \allocations^S_N$ as $\vv{\hat{v}_i(A)} = (\hat{v}_{i_1}(A(i_1)), \hat{v}_{i_2}(A(i_2)), \ldots, \hat{v}_{i_p}(A(i_p)))$ where $i_1, \ldots, i_p$ are the nodes of $N$ ordered by their index. At every internal node $i \in \internal$ with bundle $S \subseteq \items$, we use the notation $\succeq_{\Psi_i}^{\hat{v}}$ to compare two local allocations $A, B \in \allocations^S_{\children(i)}$ based on the utility vectors $\vv{\hat{v}_i(A)}$ and $\vv{\hat{v}_i(B)}$. This leads to the following multilevel fairness notions.

\begin{definition}[multilevel $\Psi$-maximizing w.r.t. $\hat{v}$] \label{def: multilevel psi v hat} Given a bundle of items $S \subseteq \items$ and an internal node $i \in \internal$, local allocation $A \in \allocations^S_{\children(i)}$ is \emph{$\Psi_i$-maximizing w.r.t. $\hat{v}$} if, for all $A' \in \allocations^S_{\children(i)}$, we have:
$$
A \succeq_{\Psi_i}^{\hat{v}} A'
$$
Then $\pi \in \mallocations$ is said to be {\em multilevel  $\Psi$-maximizing w.r.t. $\hat{v}$} if, for all $i \in \internal$, $\pi \vert_{\children(i)}$ is $\Psi_i$-maximizing w.r.t. $\hat{v}$  for $S = \pi(i)$. 
\end{definition}

It is worth noting the distinction between Def.~\ref{def: multilevel usw}, \ref{def: multilevel psi} and Def.~\ref{def: multilevel usw v hat}, \ref{def: multilevel psi v hat}: the latter are formulated in terms of $\hat{v}$ rather than $v$, and are defined with respect to local allocations $A \in \allocations^S_{\children(i)}$ as opposed to multilevel allocations. Appendix~\ref{appendix: fairness v-hat} gives the definitions corresponding to Lorenz-dominance, weighted leximin, maximum weighted Nash welfare, and weighted $p$-means welfare w.r.t. $\hat{v}$. 

\medskip

\noindent \textbf{Algorithm.} We can now formalise the functioning of the algorithm. Starting at node $i=1$, which is given the set $\items$, we compute a local allocation $A^i \in \allocations^{\items}_{\children(i)}$ which is utilitarian-optimal and $\Psi_{i}$-maximizing w.r.t. $\hat{v}$. Each child of the root $j \in \children(i)$ thus receives the bundle of items $A^i(j)$ and will in turn compute a local allocation in $\allocations^{A^i(j)}_{\children(j)}$ to its own children which is  utilitarian optimal and $\Psi_j$-maximizing w.r.t. $\hat{v}$. The process repeats until reaching the leaves. This Sequential Multilevel Algorithm (\textbf{SMA}) constructs a multilevel allocation from the initial call \textbf{SMA}$(1, \pi, v, \Psi)$ where $\pi $ is such that $\pi(1) = \items$ and $\pi(i) = \emptyset$ for all $i \in \nodes \backslash \{1\}$. The pseudocode can be found in Algorithm~\ref{algo:sequential_algo}. 

\begin{algorithm}[] \small
\caption{\small Sequential Multilevel Algorithm (SMA)}
\label{algo:sequential_algo}
\begin{algorithmic}[1]
    \State \textbf{Input:} $i$ -- a node in $\nodes$; $\pi $ -- a multilevel allocation; $v$ -- valuations of all nodes; $\Psi$ -- fairness criteria of internal nodes
    \State \textbf{Output:} $\pi $ -- a multilevel allocation
    \If{$\children \neq \emptyset$}
        \State Compute a local allocation $A^i \in \allocations^S_{\children(i)}$ that is utilitarian-optimal w.r.t. $\hat{v}$ and $\Psi_i$-maximizing w.r.t. $\hat{v}$ for $S = \pi(i)$.
        \State Set $\pi(j)=A^i(j)$ for all $j \in \children(i)$.
        \State Call \textbf{SMA}($j, \pi, v, \Psi$) for all $j \in \children(i)$.
    \EndIf
\end{algorithmic}
\end{algorithm}

We now illustrate this process with a simple execution of {\bf SMA}.

\begin{example} \label{example: sma}
Consider an instance with $m = 5$ items, i.e. $\items = \{g_1, \ldots, g_5\}$, and $n = 7$ agents arranged in a balanced binary tree: $\children(1) = \{2, 3\}$, $\children(2) = \{4, 5\}$, and $\children(3) = \{6, 7\}$ (as in Fig.~\ref{fig: initial pi}). Agents have (arbitrary) weights $w_2 = 5$, $w_3 = 2$, and $w_i = 1$ for $i\in\{4,5,6,7\}$. Let $\Psi_1$ be the maximum weighted Nash welfare and $\Psi_2$, $\Psi_3$ be Lorenz-dominance. Leaf utilities are binary additive: agents $4$, $5$, and $6$ only value items $\{g_1, g_2, g_3\}$ while agent $7$ only values $\{g_3, g_4, g_5\}$. The multilevel allocation is initialized as follows: $\pi(1)= \{g_1, \ldots, g_5\}$ and $\pi(i)= \emptyset$ for all $i\in \nodes \backslash \{1\}$ (see Figure~\ref{fig: initial pi}).

\begin{figure}[H]
    \centering
    \begin{tikzpicture}[scale=0.85,
            level distance=12mm,
            every node/.style={rectangle, rounded corners=2pt, draw, inner sep=5pt},
            level 1/.style={sibling distance=43mm},
            level 2/.style={sibling distance=21mm},
            level 3/.style={sibling distance=6mm},
            edge from parent/.style={draw, -stealth}
        ]
        
        \node (1) {1}
            child { node (2) {2}
                child { node (4) {4} }
                child { node (5) {5} }
            }
            child { node (3) {3}
                child { node (6) {6} }
                child { node (7) {7} }
            };
    
    \node[draw=none, text=black, above=0.1cm of 1] 
        {$\pi(1) = \{g_1, g_2, g_3, g_4, g_5\}$};
    
    \node[draw=none, text=black, left=0.1cm of 2] 
        {$\pi(2) = \emptyset$};
    
    \node[draw=none, text=black, right=0.1cm of 3] 
        {$\pi(3) = \emptyset$};

    \node[draw=none, text=black, below=0.1cm of 4] 
        {$\pi(4) = \emptyset$};
    
    \node[draw=none, text=black, below=0.1cm of 5] 
        {$\pi(5) = \emptyset$};

    \node[draw=none, text=black, below=0.1cm of 6] 
        {$\pi(6) = \emptyset$};

    \node[draw=none, text=black, below=0.1cm of 7] 
        {$\pi(7) = \emptyset$};
    \end{tikzpicture}
    \caption{Initial multilevel allocation $\pi$.}
    \label{fig: initial pi}
\end{figure}

The algorithm starts with the call {\bf SMA}$(1,\pi, v, \Psi)$, with  $\pi(1) = \items$ and $\pi(i) = \emptyset$ for all $i \neq 1$. It computes a utilitarian-optimal local allocation $A^{1} \in \allocations^\items_{\children(1)}$ maximizing the weighted Nash welfare w.r.t. $\hat{v}$, e.g., $A^{1}(2) = \{g_1, g_2, g_3\}$ and $A^{1}(3) = \{g_4, g_5\}$. We update $\pi $ as $\pi(2) = \{g_1, g_2, g_3\}$ and $\pi(3) = \{g_4, g_5\}$ (see Figure~\ref{fig: sma(1, pi)}). 

\begin{figure}[H]
    \centering
    \begin{tikzpicture}[scale=0.85,
            level distance=12mm,
            every node/.style={rectangle, rounded corners=2pt, draw, inner sep=5pt},
            level 1/.style={sibling distance=43mm},
            level 2/.style={sibling distance=21mm},
            level 3/.style={sibling distance=6mm},
            edge from parent/.style={draw, -stealth}
        ]
        
        \node (1) {1}
            child { node (2) {2}
                child { node (4) {4} }
                child { node (5) {5} }
            }
            child { node (3) {3}
                child { node (6) {6} }
                child { node (7) {7} }
            };
    
    \node[draw=none, text=black, above=0.1cm of 1] 
        {$\pi(1) = \{g_1, g_2, g_3, g_4, g_5\}$};
    
    \node[draw=none, text=black, left=0.1cm of 2] 
        {$\pi(2) = \{g_1, g_2, g_3\}$};
    
    \node[draw=none, text=black, right=0.1cm of 3] 
        {$\pi(3) = \{g_4, g_5\}$};

    \node[draw=none, text=black, below=0.1cm of 4] 
        {$\pi(4) = \emptyset$};
    
    \node[draw=none, text=black, below=0.1cm of 5] 
        {$\pi(5) = \emptyset$};

    \node[draw=none, text=black, below=0.1cm of 6] 
        {$\pi(6) = \emptyset$};

    \node[draw=none, text=black, below=0.1cm of 7] 
        {$\pi(7) = \emptyset$};
    \end{tikzpicture}
    \caption{The multilevel allocation $\pi$ after \textbf{SMA}(1, $\pi$, $v$, $\Psi$).}
    \label{fig: sma(1, pi)}
\end{figure}

Then {\bf SMA}$(2, \pi, v, \Psi)$ computes a local allocation $A^2 \in \allocations^{\pi(2)}_{\children(2)}$ that is utilitarian-optimal and Lorenz-dominant w.r.t. $\hat{v}$, e.g., $A^2(4) = \{g_1, g_3\}$ and $A^2(5) = \{g_2\}$, and we set $\pi(4) = \{g_1, g_3\}$ and $\pi(5) = \{g_2\}$ (see Figure~\ref{fig: sma(2, pi)}).

\begin{figure}[H]
    \centering
    \begin{tikzpicture}[scale=0.85,
            level distance=12mm,
            every node/.style={rectangle, rounded corners=2pt, draw, inner sep=5pt},
            level 1/.style={sibling distance=45mm},
            level 2/.style={sibling distance=25mm},
            level 3/.style={sibling distance=6mm},
            edge from parent/.style={draw, -stealth}
        ]
        
        \node (1) {1}
            child { node (2) {2}
                child { node (4) {4} }
                child { node (5) {5} }
            }
            child { node (3) {3}
                child { node (6) {6} }
                child { node (7) {7} }
            };
    
    \node[draw=none, text=black, above=0.1cm of 1] 
        {$\pi(1) = \{g_1, g_2, g_3, g_4, g_5\}$};
    
    \node[draw=none, text=black, left=0.1cm of 2] 
        {$\pi(2) = \{g_1, g_2, g_3\}$};
    
    \node[draw=none, text=black, right=0.1cm of 3] 
        {$\pi(3) = \{g_4, g_5\}$};

    \node[draw=none, text=black, below=0.1cm of 4] 
        {$\pi(4) = \{g_1, g_3\}$};
    
    \node[draw=none, text=black, below=0.1cm of 5] 
        {$\pi(5) = \{g_2\}$};

    \node[draw=none, text=black, below=0.1cm of 6] 
        {$\pi(6) = \emptyset$};

    \node[draw=none, text=black, below=0.1cm of 7] 
        {$\pi(7) = \emptyset$};
    \end{tikzpicture}
    \caption{The multilevel allocation  $\pi$ after \textbf{SMA}(2, $\pi$, $v$, $\Psi$).}
    \label{fig: sma(2, pi)}
\end{figure}

\textbf{SMA}$(3, \pi, v, \Psi)$ computes a local allocation $A^3 \in \allocations^{\pi(3)}_{\children(3)}$ with the same properties. Since node $6$ derives no value from $g_4$ and $g_5$, we get:  $A^3(6) = \emptyset$ and $A^3(7) = \{g_4, g_5\}$. We update $\pi$ accordingly (see Figure~\ref{fig: sma(3, pi)}). 

\begin{figure}[H]
    \centering
    \begin{tikzpicture}[scale=0.85,
            level distance=12mm,
            every node/.style={rectangle, rounded corners=2pt, draw, inner sep=5pt},
            level 1/.style={sibling distance=45mm},
            level 2/.style={sibling distance=25mm},
            level 3/.style={sibling distance=6mm},
            edge from parent/.style={draw, -stealth}
        ]
        
        \node (1) {1}
            child { node (2) {2}
                child { node (4) {4} }
                child { node (5) {5} }
            }
            child { node (3) {3}
                child { node (6) {6} }
                child { node (7) {7} }
            };
    
    \node[draw=none, text=black, above=0.1cm of 1] 
        {$\pi(1) = \{g_1, g_2, g_3, g_4, g_5\}$};
    
    \node[draw=none, text=black, left=0.1cm of 2] 
        {$\pi(2) = \{g_1, g_2, g_3\}$};
    
    \node[draw=none, text=black, right=0.1cm of 3] 
        {$\pi(3) = \{g_4, g_5\}$};

    \node[draw=none, text=black, below=0.1cm of 4] 
        {$\pi(4) = \{g_1, g_3\}$};
    
    \node[draw=none, text=black, below=0.1cm of 5] 
        {$\pi(5) = \{g_2\}$};

    \node[draw=none, text=black, below=0.1cm of 6] 
        {$\pi(6) = \emptyset$};

    \node[draw=none, text=black, below=0.1cm of 7] 
        {$\pi(7) = \{g_4, g_5\}$};
    \end{tikzpicture}
    \caption{The multilevel allocation  $\pi$ after \textbf{SMA}(3, $\pi$, $v$, $\Psi$).}
    \label{fig: sma(3, pi)}
\end{figure}

Since leaves do not recurse, the final allocation is:  $\pi(1) = \items$, $\pi(2)  =  \{g_1, g_2, g_3\}$,  $\pi(3)  =  \{g_4, g_5\}$, $\pi(4) = \{g_1, g_3\}$, $\pi(5)  = \{g_2\}$, $\pi(6) = \emptyset$ and $\pi(7) = \{g_4, g_5\}$.

\end{example}

\noindent Now to prove that {\bf SMA} can be implemented in polynomial time, we first need to show that estimated utilities $\hat{v}_i$, for $i \in \nodes$, are MRF. In the proof, we discuss non-redundant local allocations w.r.t. $\hat{v}$. We define formally this notion below.

\begin{definition}[non-redundancy w.r.t. $\hat{v}$]
    Given a set of nodes $N \subseteq \nodes$, and a bundle $S \subseteq \items$, local allocation $A \in \allocations^S_N$ is non-redundant w.r.t. $\hat{v}$ if $\hat{v}_i(A(i)) - \hat{v}_i(A(i) \setminus \{g\}) > 0$ for all $i \in N$ and all $g \in A(i)$.
\end{definition}

\begin{proposition} \label{proposition: hat v MRF}
Function $\hat{v}_i$ is a matroid rank function for every node $i \in \nodes$.    
\end{proposition}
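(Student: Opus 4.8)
The plan is to observe that the three preceding lemmas have already established each of the three defining properties of a matroid rank function (Definition~\ref{def: mrf}) for $\hat{v}_i$, but only for \emph{internal} nodes. Since the proposition quantifies over \emph{all} nodes $i \in \nodes$, the first thing I would do is partition $\nodes$ into the internal nodes $\internal$ and the leaves $\leaves$, and handle these two cases separately.

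For the internal case $i \in \internal$, I would simply invoke Lemma~\ref{lemma: hat v monotone}, Lemma~\ref{lemma: hat v submodular}, and Lemma~\ref{lemma: hat v binary marginal gains} to conclude that $\hat{v}_i$ is monotone, submodular, and has binary marginal gains, respectively. By Definition~\ref{def: mrf}, these three properties are exactly what is required for a set function to be an MRF, so the conclusion for internal nodes is immediate.

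For the leaf case $i \in \leaves$, I would recall that the estimated utility was defined by $\hat{v}_i(S) = u_i(S)$ for all $S \subseteq \items$ whenever $i$ is a leaf. Since the utility model assumes that each leaf valuation $u_i$ belongs to the class of matroid rank functions, $\hat{v}_i$ coincides with an MRF and is therefore itself an MRF. Taken together, the two cases exhaust every node of $\nodes$, which completes the argument.

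I do not anticipate any genuine obstacle here: all the technical content has already been discharged in the three lemmas, and in particular the delicate part was the submodularity argument of Lemma~\ref{lemma: hat v submodular}, which relied on expressing $\hat{v}_i$ as an integer supremal convolution of the leaf MRFs and exploiting the fact that $M^\natural$-concavity is preserved under this operation. Consequently, this proposition is essentially an assembly step that packages the three lemmas together with the trivial leaf base case, and the only point worth being explicit about is the extension of the quantifier from $\internal$ (covered by the lemmas) to all of $\nodes$ (covered once the leaves are treated directly).
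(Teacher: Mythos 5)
Your proposal is correct and matches the paper's own proof exactly: both split $\nodes$ into leaves, where $\hat{v}_i = u_i$ is an MRF by assumption, and internal nodes, where Lemmas~\ref{lemma: hat v monotone}, \ref{lemma: hat v submodular}, and \ref{lemma: hat v binary marginal gains} supply the three defining properties of Definition~\ref{def: mrf}. Nothing is missing.
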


\begin{proof} For any leaf $x\in \leaves$, we have $\hat{v}_x(S)=u_x(S)$ which is a MRF by definition. For any internal node $i \in \internal$, recall that $\hat{v}_i(S) = \max_{A \in \allocations^S_{\leaves(i)}} \sum_{x \in \leaves(i)} u_x(A(x))$ for any bundle $S \subseteq \items$. Denote $x_1, \ldots, x_\ell$ the leaves in $\leaves(i)$. We know that all leaves in $\leaves(i)$ are equipped with matroid-rank valuations. More precisely, each leaf $x_k \in \leaves(i)$ comes with an associated matroid $\mathcal{M}_k = (\items, \mathcal{J}_k)$ where $\items$ is the ground set of the matroid, and $\mathcal{J}_k$ is the independent sets of the matroid (which is composed of all the bundles $S \subseteq \items$ such that $u_k(S)=|S|$, i.e. all items in $S$ have positive marginal utility).  
Let us consider the union of those matroids, denoted by $\mathcal{M} = (\items, \mathcal{J})$, with $\mathcal{J} = \{J_1 \cup J_2 \cup \cdots \cup J_\ell \ | \ J_k \in \mathcal{J}_k, 1 \leq k \leq \ell\}$. By definition, any set $J \in \mathcal{J}$ can be obtained from $\ell$ (possibly overlapping) bundles such that each bundle belongs to the independent sets of a different leaf  (and $J$ can be obtained by different collections of such independent sets). Note that, it is known that the union of matroids is itself a matroid \cite{NashWilliams1966}. We denote $r_{\mathcal{M}}$ the rank function of $\mathcal{M}$, defined as $r_{\mathcal{M}}(S) = \max\{|J| : J \subseteq S \mbox{ and }  J \in \mathcal{J}\}$ for any $S \subseteq \items$.
We now show that $r_{\mathcal{M}}$ coincides with $\hat{v}_i$.

Recall that, given a bundle $S \subseteq \items$, $\hat{v}_i(S)$ is the welfare of the utilitarian-optimal allocation $A^* \in \allocations^S_{\leaves(i)}$ (by definition). Note that maximizing over all local allocations is equivalent to maximizing over all non-redundant local allocations w.r.t. $\hat{v}_i$. Indeed, from any utilitarian-optimal allocation, one can obtain a non-redundant allocation with the same welfare simply by removing items with zero marginal utility. Let $A$ denote the non-redundant allocation obtained from $A^*$ in this manner. Since $A$
 is non-redundant, we know that $A(x_k)$ is an independent set of matroid $\mathcal{M}_k$ for any $k \in \{1, \ldots, \ell\}$. Hence, by taking the union of the bundles received by the leaves, we obtain an independent set $J$ of $\mathcal{M}$ (by definition of the union of matroids). The cardinality of this set is identical to the utilitarian social welfare of $A$, since the bundles are disjoint and consist of items with positive marginal utilities.

Having established one direction, we now prove the converse. More precisely, we now prove that any set in $\mathcal{J}$ can be associated to some non-redundant allocation w.r.t. $\hat{v}_i$. Indeed, any $J \in \mathcal{J}$,  with $J = J_1 \cup J_2 \cup \cdots \cup J_\ell$, can alternatively be written as a union of disjoint sets: $J = J'_1 \cup J'_2 \cup \cdots \cup J'_\ell$ with $J'_k = J_k \setminus \cup_{l \in [1, k-1]} J_{l}$ for $k \in \{1,\ldots,\ell\}$. This follows from the hereditary property of matroids, which states that, for any matroid $\mathcal{M}_k = (\items, \mathcal{J}_k)$, if $J_k \in \mathcal{J}_k$, then any subset $J_k' \subseteq J_k$ also belongs to $\mathcal{J}_k$. The local allocation $A'$ defined by $A'(x_k) = J'_k$ for all $k \in \{1,\ldots,\ell\}$ is a non-redundant allocation (since $J'_k$ belongs to $\mathcal{M}_k$) whose social welfare is equal to the cardinality of $J$. Hence, the rank of matroid $\mathcal{M}$ coincides with a utilitarian-maximizing non-redundant allocation w.r.t. $\hat{v}$, i.e. $\hat{v}_i$ is a matroid-rank valuation. 
\end{proof}

We next show that \textbf{SMA} has polynomial time complexity. 

\begin{theorem} \label{theorem: sma terminates}
Algorithm {\bf SMA} can be implemented to run in polynomial time with respect to $n$ and $m$, when $\Psi_i \in \fair$ for all $i \in \internal$.
\end{theorem}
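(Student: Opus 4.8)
The plan is to bound separately the number of recursive invocations and the work done at each one. Since \textbf{SMA} is first called on $r_\mtree$ and then recurses on every child exactly once, the recursion tree coincides with $\mtree$; hence there are exactly $n$ invocations. It therefore suffices to show that the work performed at each internal node $i \in \internal$ runs in time polynomial in $n$ and $m$. The only nontrivial step there is line~4: computing a local allocation $A^i \in \allocations^S_{\children}$ that is utilitarian-optimal and $\Psi_i$-maximizing w.r.t. $\hat{v}$, for $S = \pi(i)$ (updating $\pi$ on line~5 is linear).

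The central observation I would make is that this local computation is an ordinary single-level fair division problem among the children $\children$, whose valuations are the estimated utilities $(\hat{v}_j)_{j \in \children}$. By Proposition~\ref{proposition: hat v MRF}, each $\hat{v}_j$ is a matroid rank function. I would then invoke the known facts that, for MRF valuations, a utilitarian-optimal allocation is computable in polynomial time via matroid intersection \citep{Chakrabarty_et_al2019}, and that each fairness criterion under consideration (Lorenz-dominance, weighted leximin, maximum weighted Nash welfare, weighted $p$-means) can be maximized simultaneously with utilitarian optimality in polynomial time \citep{Benabbou_et_al2021, Viswanathan_Zick2023b}. Crucially, all these procedures access the valuations only through polynomially many value-oracle queries of the form $\hat{v}_j(T)$ with $T \subseteq S$.

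It then remains to show that each such query is answerable in polynomial time, which I expect to be the main obstacle: unlike the leaf utilities $u_x$, the function $\hat{v}_j$ is not handed to us as a primitive but is itself defined as the optimum of a welfare-maximization problem over the subtree $\mtree_j$. Concretely,
\[
  \hat{v}_j(T) = \max_{A \in \allocations^T_{\mathcal{L}(j)}} \sum_{x \in \mathcal{L}(j)} u_x(A(x)),
\]
the value of an optimal allocation of $T$ among the leaves of $\mtree_j$. Since each $u_x$ is a matroid rank function, this is precisely a matroid-union (submodular welfare) problem on a ground set of size $|T| \le m$ with at most $n$ matroids, solvable in polynomial time given independence oracles for the $u_x$, which are supplied as input. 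Hence a single value-oracle query to $\hat{v}_j$ costs $\mathrm{poly}(n,m)$.

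Putting the pieces together, each of the $n$ invocations runs a single-level fair division computation issuing $\mathrm{poly}(n,m)$ value-oracle queries, each answered in $\mathrm{poly}(n,m)$ time, for a total of $\mathrm{poly}(n,m)$ per node and $\mathrm{poly}(n,m)$ overall. This establishes the claimed polynomial running time. The one point I would take care to state explicitly is the model of computation for the valuations (oracle access to the $u_x$, with $\hat{v}_j$ realized on demand rather than tabulated), since a naive reading that precomputes $\hat{v}_j$ on all subsets of $\items$ would be exponential.
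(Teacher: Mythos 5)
Your proposal is correct and follows essentially the same route as the paper: bound the number of recursive calls by $n$, observe that each local computation is a standard monolevel MRF problem solvable by the known algorithms since each $\hat{v}_j$ is an MRF (Proposition~\ref{proposition: hat v MRF}), and note that each value query $\hat{v}_j(T)$ is itself a polynomial-time welfare-maximization over the leaves of $\mtree_j$ (the paper phrases this as a matroid intersection via \citep{Chakrabarty_et_al2019} rather than matroid union, but the point is the same). Your explicit remark about realizing $\hat{v}_j$ on demand through a value oracle rather than tabulating it is a welcome clarification that the paper leaves implicit.
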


\begin{proof}
For any internal node $i\in \internal$, since $\hat{v}_j$ is a MRF for all $j\in \children(i)$ (by Proposition \ref{proposition: hat v MRF}), the local allocation $A^i$ exists and can be computed using existing algorithms designed for standard (monolevel) allocation problems with MRFs. For Lorenz-dominance, weighted leximin, maximum weighed Nash welfare and maximum weighted p-means welfare, the algorithms proposed in \cite{Babaioff_et_al2021, Viswanathan_Zick2023b} apply and run in polynomial time w.r.t. the number of agents (here, $|\children(i)| \le n$), the number of items (here, $|\pi(i)| \le m$), and the complexity of computing the agents' utilities (i.e., the $\hat{v}_j$'s for $j \in \children(i)$). Note that values $\hat{v}_j(\cdot)$ can be computed in polynomial-time by solving a matroid intersection problem as identified by \cite{Benabbou_et_al2021}. 
This can be done using the state-of-the-art algorithm from \cite{Chakrabarty_et_al2019} in $O(m^{3/2} n \log mn \cdot T_u)$, 
where $T_u$ denote the worst-case time to evaluate a leaf’s utility. Since in our setting we assume that values $u_x(\cdot)$ can be accessed in polynomial time w.r.t. $m$, it follows that $A^i$ can be computed in polynomial time  w.r.t. $n$ and $m$ for any $i \in \internal$. 
Finally, since the number of calls to {\bf SMA} is equal to $|\internal|$ and $|\internal| \le n$, we conclude that $\pi^*$ can be constructed in polynomial time w.r.t. $n$ and $m$.

Although the result has just been established, we now show that the values $\hat{v}_j(\cdot)$ can be computed by solving a matroid partition problem rather than a matroid intersection problem, which yields a better theoretical running time. More precisely, using the state-of-the-art algorithm for solving the matroid partition problem \cite{Terao2025} would lead to a complexity of $O((m \log m + n)\sqrt{m} \cdot T_u)$, which is lower than that of matroid intersection. Note also that replacing the matroid intersection step in the algorithms of \cite{Benabbou_et_al2021} and \cite{Babaioff_et_al2021} with a matroid partition algorithm improves the overall computational complexity of their methods. 

We now show that the computation of $\hat{v}_j(S)$ for any bundle $S \subseteq \items$ reduces to solving a matroid partition problem. First, let us define the matroid partition problem: given a collection of $p$ matroids $\mathcal{M}_1 = (E, \mathcal{J}_1), \ldots, \mathcal{M}_p = (E, \mathcal{J}_p)$ defined on a common ground set $E$, the objective is to find a partitionable set $T \subseteq E$ of maximum cardinality; a set $T$ is said to be partitionable if there exists a partition $(T_1, \ldots, T_p)$ such that $T_k \in \mathcal{J}_k$ for any $k \in \{1, \ldots, p\}$. Then, recall that $\hat{v}_j(S) = \max_{A \in \allocations^S_{\leaves(j)}} \sum_{x \in \leaves(j)} u_x(A(x))$ by definition. As  explained in the proof of Proposition~\ref{proposition: hat v MRF}, it suffices to restrict attention to non-redundant local allocations when optimizing. Now let $x_1, \ldots, x_\ell$ denote the leaves in $\leaves(j)$. Since the utilities of the leaves are matroid rank functions (by definition), each leaf $x_k \in \leaves(j)$ comes with an associated matroid $\mathcal{M}_k = (\items, \mathcal{J}_k)$ where $\mathcal{J}_k$ is composed of all bundles whose items have positive marginal utility for $x_k$. Let us consider the restriction of $\mathcal{M}_k$ to $S$, defined as $\mathcal{M}'_k = (S,\mathcal{J}'_k)$ where $\mathcal{J}'_k = \{J \in \mathcal{J}_k \ | \ J \subseteq S\}$. 

We now show that solving the matroid partition problem for the collection of matroids $\mathcal{M}_1', \ldots, \mathcal{M}_\ell'$ amounts to finding a non-redundant allocation that achieves the value $\hat{v}_j(S)$, i.e. that maximizes the sum of the leaves' utilities. First, note that we can associate every non-redundant allocation $A$ to a partitionable set: $T=\cup_{k=1}^\ell A(x_k)$. Indeed, bundles $A(x_k)$, with $k \in \{1, \ldots, \ell\}$, are disjoint and belong to $\mathcal{J}'_k$ by non-redundancy. Moreover we know that $\sum_{x \in \leaves(j)} u_x(A(x)) = |T|$, again by non-redundancy. Conversely, using similar arguments, any partitionable set $T$, with partition $(T_1,\ldots, T_\ell)$, can be associated with a non-redundant allocation $A\in \allocations^S_{\leaves(j)}$, defined by $A(x_k) = T_k$ for all $k \in \{1, \ldots, \ell\}$, such that $\sum_{x \in \leaves(j)} u_x(A(x)) = |T|$. Hence, $\hat{v}_j(S)$ coincide with the largest partitionable set, which establishes the result. 
\end{proof}

Let $\pi^*$ be the multilevel allocation returned by {\bf SMA}. To show that $\pi^*$ is both multilevel utilitarian optimal and multilevel $\Psi$-maximizing w.r.t. $v$, let us remark that this holds w.r.t. $\hat{v}$. 

\begin{lemma} \label{corollary: SMA usw psi}
 $\pi^*$ is both multilevel utilitarian optimal and multilevel $\Psi$-maximizing w.r.t. $\hat{v}$.
\end{lemma}

\begin{proof} 
By Line 5, we have $\pi^*|_{\children(i)} = A^i$ for all nodes $i \in \internal$. Since $A^i$ is both utilitarian optimal and $\Psi_i$-maximizing w.r.t. $\hat{v}$ for $S = \pi(i)$, $\pi^*$ is multilevel utilitarian optimal and multilevel $\Psi$-maximizing w.r.t. $\hat{v}$ (by Def. \ref{def: multilevel usw v hat} and \ref{def: multilevel psi v hat}). 
\end{proof}

The following key proposition relates $\hat{v}$ and $v$ in order to extend this result to the original valuations. The proof can be found in Appendix~\ref{appendix: missing proofs section 3}. 

\begin{restatable}{proposition}{vhatsupv} \label{proposition:v hat  = and sup v}
    For any node $i \in \nodes$, $\hat{v}_i(\pi^*(i)) =  v_i(\pi^*)$, and $\hat{v}_i(\pi(i)) \geq  v_i(\pi)$ for any multilevel allocation $\pi \in \mallocations$.
\end{restatable}

\begin{theorem} \label{theorem: sma multilevel fair}
    {\bf SMA} returns a multilevel allocation that is both multilevel utilitarian optimal and multilevel $\Psi$-maximizing w.r.t. $v$.
\end{theorem}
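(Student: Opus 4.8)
The plan is to transfer the properties already established with respect to $\hat{v}$ (Remark~\ref{corollary: SMA usw psi}) to the true valuations $v$, using Proposition~\ref{proposition:v hat  = and sup v} as the bridge. Fix an internal node $i \in \internal$ and an arbitrary competitor $\pi' \in \mallocations$ with $\pi'\vert_{\children} \in \allocations^{\pi^*(i)}_{\children}$. Two facts will be invoked repeatedly: (a) by Proposition~\ref{proposition:v hat  = and sup v}, $v_j(\pi^*) = \hat{v}_j(\pi^*(j))$ for every child $j \in \children$, whereas $v_j(\pi') \le \hat{v}_j(\pi'(j))$; and (b) by Remark~\ref{corollary: SMA usw psi}, the local allocation $A^i = \pi^*\vert_{\children}$ is both utilitarian optimal and $\Psi_i$-maximizing w.r.t. $\hat{v}$ for the bundle $S = \pi^*(i)$, hence it dominates the feasible local allocation $B = \pi'\vert_{\children} \in \allocations^{\pi^*(i)}_{\children}$.

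For utilitarian optimality w.r.t. $v$, I would chain these facts: $\sum_{j \in \children} v_j(\pi^*) = \sum_{j \in \children} \hat{v}_j(A^i(j)) \ge \sum_{j \in \children} \hat{v}_j(\pi'(j)) \ge \sum_{j \in \children} v_j(\pi')$, where the first equality uses (a), the middle inequality is the $\hat{v}$-utilitarian optimality of $A^i$ from (b), and the last inequality is again (a). Since $i$ and $\pi'$ are arbitrary, this yields multilevel utilitarian optimality w.r.t. $v$ by Definition~\ref{def: multilevel usw}.

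For the $\Psi_i$-maximizing part I would work at the level of utility vectors. By (a) the true vector of $\pi^*$ at $i$ equals the $\hat{v}$-vector of $A^i$, i.e. $\vv{v_i(\pi^*\vert_{\children})} = \vv{\hat{v}_i(A^i)}$, while the true vector of $\pi'$ is dominated componentwise by its $\hat{v}$-vector, $\vv{v_i(\pi'\vert_{\children})} \le \vv{\hat{v}_i(\pi'\vert_{\children})}$. The $\Psi_i$-maximality of $A^i$ in (b) gives $\vv{\hat{v}_i(A^i)} \succeq_{\Psi_i} \vv{\hat{v}_i(\pi'\vert_{\children})}$. What remains is to pass from $\vv{\hat{v}_i(\pi'\vert_{\children})}$ down to the smaller vector $\vv{v_i(\pi'\vert_{\children})}$.

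This closing step is the main obstacle, as it requires a structural property of the fairness criteria: each $\Psi_i$ must be \emph{Pareto-monotone}, meaning that $\vec{a} \ge \vec{b}$ componentwise implies $\vec{a} \succeq_{\Psi_i} \vec{b}$. I would verify (or cite) this for each of the four instantiations — Lorenz-dominance, weighted leximin, maximum weighted Nash welfare, and weighted $p$-means welfare — using their explicit forms in the appendix; for all of them, increasing any coordinate weakly improves the criterion (for Lorenz it corresponds to generalized Lorenz dominance). Combined with the fact that $\succeq_{\Psi_i}$ is a total order, hence transitive, I then conclude: from $\vv{\hat{v}_i(A^i)} \succeq_{\Psi_i} \vv{\hat{v}_i(\pi'\vert_{\children})}$ and $\vv{\hat{v}_i(\pi'\vert_{\children})} \succeq_{\Psi_i} \vv{v_i(\pi'\vert_{\children})}$ (the latter by Pareto-monotonicity), transitivity gives $\vv{v_i(\pi^*\vert_{\children})} \succeq_{\Psi_i} \vv{v_i(\pi'\vert_{\children})}$, that is $\pi^*\vert_{\children} \succeq_{\Psi_i}^v \pi'\vert_{\children}$. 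Ranging over all $i \in \internal$ and all competitors $\pi'$ then yields multilevel $\Psi$-maximizing w.r.t. $v$ by Definition~\ref{def: multilevel psi}, completing the proof.
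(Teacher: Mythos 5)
Your proposal is correct and follows essentially the same route as the paper: both parts transfer the $\hat{v}$-guarantees of Remark~\ref{corollary: SMA usw psi} to $v$ via the equality $\hat{v}_j(\pi^*(j))=v_j(\pi^*)$ and the inequality $\hat{v}_j(\pi(j))\ge v_j(\pi)$ from Proposition~\ref{proposition:v hat  = and sup v}, chaining for the utilitarian part and using transitivity of $\succeq_{\Psi_i}$ for the fairness part. The only difference is that you explicitly isolate and propose to verify the Pareto-monotonicity of each $\Psi_i$ (componentwise dominance implies $\succeq_{\Psi_i}$-dominance), a step the paper's proof uses implicitly when it derives $\vv{\hat{v}_i(\pi\vert_{\children})} \succeq_{\Psi_i} \vv{v_i(\pi\vert_{\children})}$ from the second part of the proposition --- your version is the more careful one.
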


\begin{proof}
 By Def. \ref{def: multilevel usw} and \ref{def: multilevel psi}, we just need to prove that $\pi^* \vert_{\children(i)}$ is both utilitarian-optimal and $\Psi_i$-maximizing w.r.t. $v$ for any $i \in \internal$. First, recall that $\pi^* \vert_{\children(i)}$ is utilitarian optimal w.r.t. $\hat{v}$ for $S = \pi(i)$ (see Lemma~\ref{corollary: SMA usw psi}). Hence, $\sum_{j \in \children(i)} \hat{v}_j(\pi^*(j)) \geq \sum_{j \in \children(i)} \hat{v}_j(A(j))$ holds for all $A \in \allocations^{\pi^*(i)}_{\children(i)}$. Then, using the first part of Prop. \ref{proposition:v hat  = and sup v}, we get $\sum_{j \in \children(i)} v_j(\pi^*) \geq \sum_{j \in \children(i)} \hat{v}_j(A(j))$ which trivially implies $\sum_{j \in \children(i)} v_j(\pi^*) \geq  \sum_{j \in \children(i)} \hat{v}_j(\pi(j))$ for any $\pi \in  \mallocations$ such that $\pi(i) = \pi^*(i)$. 
 We can then derive $\sum_{j \in \children(i)} v_j(\pi^*) \geq \sum_{j \in \children(i)} \hat{v}_j(\pi(j)) \ge \sum_{j \in \children(i)} {v}_j(\pi(j))$ from the second part of Prop. \ref{proposition:v hat  = and sup v}, showing that $\pi^* \vert_{\children(i)}$ is utilitarian optimal w.r.t. $v$. 
 We now show that $\pi^* \vert_{\children(i)}$ is $\Psi_i$-maximizing w.r.t. $v$. Since $\pi^* \vert_{\children(i)}$ is $\Psi_i$-maximizing w.r.t. $\hat{v}$ for $S = \pi(i)$ (by Lemma~\ref{corollary: SMA usw psi}), $\vv{\hat{v}_i(\pi^*\vert_{\children(i)})} \succeq_{\Psi_i} \vv{\hat{v}_i(A)}$ holds for all $A \in \allocations^{\pi^*(i)}_{\children(i)}$. By Prop.~\ref{proposition:v hat  = and sup v}, $\hat{v}_j(\pi^*(j)) =  v_j(\pi^*)$ holds for any $j \in \children(i)$, and so $\vv{\hat{v}_i(\pi^*\vert_{\children(i)})}  = \vv{v_i(\pi^*\vert_{\children(i)})}$. Thus for any $A \in \allocations^{\pi^*(i)}_{\children(i)}$, we have $\vv{v_i(\pi^*\vert_{\children(i)})} \succeq_{\Psi_i} \vv{\hat{v}_i(A)}$, implying $\vv{v_i(\pi^*\vert_{\children(i)})} \succeq_{\Psi_i} \vv{\hat{v}_i(\pi\vert_{\children(i)})}$ for any $\pi \in \mallocations$ such that $\pi(i) = \pi^*(i)$. 
Similarly, using the second part of Prop.~\ref{proposition:v hat  = and sup v}, we get $\vv{\hat{v}_i(\pi\vert_{\children(i)})} \succeq_{\Psi_i} \vv{{v}_i(\pi\vert_{\children(i)})}$. We then obtain $\vv{v_i(\pi^*\vert_{\children(i)})} \succeq_{\Psi_i} \vv{{v}_i(\pi\vert_{\children(i)})}$ by transitivity. 
\end{proof}

{\bf SMA} is general in the sense that it can be implemented using any algorithm designed for the standard (monolevel) setting to compute $A^i$ at node $i  \in \internal$, and it works with other fairness criteria, provided that there exists a utilitarian optimal allocation that is also fair. For example, we show in Appendix~\ref{appendix: on EF1} that \textbf{SMA} can compute a multilevel envy-free up to one good allocation, using the algorithmn proposed in \cite{Benabbou_et_al2021} and a suitable adaptation of envy-freeness to our multilevel setting. However, although \textbf{SMA} can be implemented in polynomial time for various fairness criteria, it remains computationally expensive in practice, as we show in Section~\ref{section: simulations}. 

\section{Multilevel General Yankee Swap} \label{section: mgys}

In this section, we propose an extension of the General Yankee Swap (\textbf{GYS}) \cite{Viswanathan_Zick2023b} to the multilevel setting, called Multilevel General Yankee Swap ({\bf MGYS}). \textbf{GYS} applies to fairness notions satisfying two properties: (a) fair allocations are Pareto-optimal; (b) the fairness criterion has a coherent \emph{gain function}, non-increasing in an agent's utility, which indicates which agent should be prioritized. The four fairness notions considered in the paper meet these criteria (namely, Lorenz-dominance, weighted leximin, maximum weighted Nash welfare, and weighted p-means welfare). 
While our multilevel extension of \textbf{GYS} always produces a multilevel utilitarian-optimal allocation w.r.t. $v$, it is not necessarily multilevel $\Psi$-maximizing and thus serves as a heuristic. 
Nonetheless, as Section~\ref{section: simulations} shows, it runs significantly faster than {\bf SMA} allowing us to solve much larger instances, while achieving strong fairness in practice. 

\medskip

Basically, {\bf MGYS} amounts to applying {\bf GYS} to the leaves of the tree, with two modifications: (1) we propose a new criterion, adapted for the multilevel setting, to select the leaf for improvement, and (2) we adapt the item transfers between agents to maintain a coherent multilevel allocation. We now describe \textbf{MGYS}, highlighting the differences with \textbf{GYS}.
We first modify $\tree$ by adding an additional leaf, denoted $0$, whose parent is $1$. Although we do not account for  node $0$ when discussing efficiency and fairness, we sometimes treat it as an agent with utility function $u_0(S) = |S|$. {\bf MGYS} starts with the multilevel allocation $\pi $ which assigns all the items to node $0$. At each iteration, it selects a leaf that can either take a new item from node $0$ or steal an item from another leaf under specific conditions. The selection protocol and the allocation step are described below.

\medskip

\noindent \textbf{Exchange graph and item transfers.} The \emph{exchange graph} of a given multilevel allocation $\pi \in \mallocations$, denoted $G(\pi)$, is a directed graph over the set of items $\items$. More precisely, there is an arrow from good $g  \in \items$, owned by leaf $x \in \leaves$, to good $g' \in \items$, owned by another leaf $y \in \leaves \setminus \{x\}$ in $G(\pi)$ if and only if $x$ is indifferent between $g$ and $g'$ given its bundle of items, i.e., $u_x((\pi(x) \setminus \{g\}) \cup \{g'\}) = u_x(\pi(x))$. Given a leaf $x \in \leaves$, let $F_x(\pi)$ be the set of items for which $x$ has a strictly positive marginal gain, i.e., $F_x(\pi) = \{g \in \items : \Delta^{u_x}(\pi(x), g) = 1\}$. For any two leaves $x,y \in \leaves$ with $x \neq y$, a {\em transfer path} is a path from a node in $F_x(\pi)$ to a node in $\pi(y)$.  Given a transfer path $T =  (g_1, g_2, \ldots, g_t)$, we call \emph{path augmentation} the operation consisting in reallocating item $g_t$ to the owner of $g_{t-1}$, $g_{t-1}$ to the owner of $g_{t-2}$, and so on, until $g_1$ is reallocated to leaf $x$. An additional change to {\bf GYS} is that {\bf MGYS} ensures that $\pi $ remains a multilevel allocation after reallocation by appropriately updating the bundles of internal nodes: when a leaf $z$ receives item $g$ from another leaf $z'$, each ancestor $i \in \ancestors(z) \setminus \ancestors(z')$ adds $g$ to its bundle while each ancestor $i \in \ancestors(z') \setminus \ancestors(z)$ removes it.

\medskip

\noindent \textbf{Gain functions and leaf selection.} As in \textbf{GYS}, our algorithm uses gain functions to guide the allocation towards fair outcomes. While, if applied to the leaves, \textbf{GYS} would select the improving leaf by simply maximizing the gain function, {\bf MGYS} instead relies on top-down approach. 
For any node $i\in \nodes \backslash\{1\}$, let $\phi_{\parent(i)}(\pi, i)$ denote the {\em gain function} of agent $i$, capturing the marginal improvement from receiving an item under allocation $\pi $. Note that this function depends on the fairness notion of $\Psi_{\parent(i)} \in \fair$, and returns a $b$-dimensional vector computed from $v_i(\pi)$. Gain functions for the aforementioned fairness criteria are provided in Appendix~\ref{appendix: gain functions}. At each iteration, the leaf $x$ chosen for improvement is selected top-down: starting from the root, we choose the child $i \in \children(1)\backslash \{0\}$ such that $\phi_{1}(\pi, i)$ lexicographically dominates
all others, breaking ties using the least index. 
We repeat the process on the children of $i$, and iterate until reaching a leaf. This leaf is then used for performing a path augmentation, if possible. This process is referred to as $\mathtt{Select\_Leaf}$ and is detailed in Algorithm~\ref{algo:find_leaf}.

\begin{algorithm}[] \small
\caption{\small Select\_Leaf}
\label{algo:find_leaf}
\begin{algorithmic}[1]
    \State \textbf{Input}: $\tree$ - a multilevel tree ; $i \in \nodes$ - a node; $v$ -- valuations of all nodes; $\Psi$ -- fairness criteria of internal nodes.
    \State \textbf{Output} : $x \in \leaves(i)$ - a leaf in $\tree_i$
    \If{$\children(i) = \emptyset$} 
     \State \Return $i$ 
    \EndIf
    \If{$i = 1$} 
    \State $N \gets \arg \underset{j \in \children(i) \setminus \{0\}}{\max} \phi_i(\pi, j)$
    \Else 
    \State $N \gets \arg \underset{j \in \children(i)}{\max} \phi_i(\pi, j)$ \EndIf
    \State $j^* \gets \min_{j \in N} j$
    \State \Return $\mathtt{Select\_Leaf}(\tree, j^*,v,\Psi)$
\end{algorithmic}
\end{algorithm}

\noindent \textbf{Algorithm.} {\bf MGYS} starts with a multilevel allocation $\pi \in \mallocations$ where every item is unallocated, i.e. belongs to node 0. Then, at every iteration, we select a leaf $x \in \leaves$ according to the selection protocol described above. Then, if a transfer path from $F_x(\pi)$ to $\pi(0)$ exists, we find the shortest one and perform the corresponding path augmentation. Otherwise, we remove $x$ from the tree and then prune iteratively any ancestor nodes that become childless. {\bf MGYS} stops when the only nodes remaining are the root and node $0$. Pseudocode is in Algorithm~\ref{algo:multilevel_gys}. In what follows, the multilevel allocation returned by {\bf MGYS} is denoted by $\pi^*$.

\begin{algorithm}[] \small
\caption{\small Multilevel General Yankee Swap (MGYS)}
\label{algo:multilevel_gys}
\begin{algorithmic}[1]
    \State \textbf{Input}: $\tree$ - a multilevel tree ; $\items$ - a set of items; $v$ -- valuations of all nodes ; $\Psi$ -- fairness criteria of internal nodes.
    \State \textbf{Output}: $\pi $ - a multilevel allocation
    \State Set  $\pi(0)= \pi(1)= \items$ and $\pi(x) =  \emptyset$ for all $x \in \nodes \backslash \{0,1\}$.
    \While{$\nodes \neq \{1,0\}$}
    \State $x \gets \mathtt{Select\_Leaf}(\tree, 1,v, \Psi)$
    \State Find the shortest path in $G(\pi)$ from $F_{x}(\pi)$ to $\pi(0).$
    \If{such a path exists}
        \State Perform the corresponding path augmentation.
    \Else
        \State $i \leftarrow x$
        \While{$\children(i) = \emptyset$}
        \State Remove $i$ from $\tree$.
        \State $i\leftarrow \parent(i)$
        \EndWhile
    \EndIf
    \EndWhile
    \State \Return $\pi$
\end{algorithmic}
\end{algorithm}

\begin{example} \label{example: mgys}
We run {\bf MGYS} on the instance from Example~\ref{example: sma}. First, recall that $\Psi_1$ is maximum weighted Nash welfare. Therefore, for any $j \in \children(1)$, the gain function is $\phi_1(\pi, j)  = (1 + 1/v_j(\pi))^{w_j}$ if $v_j(\pi) > 0$, and arbitrarily large otherwise. Since $\Psi_2$ and $\Psi_3$ correspond to Lorenz-dominance, we have $\phi_2(\pi,j) = -v_j(\pi)$ and $\phi_3(\pi,j') = -v_{j'}(\pi)$ for $j \in \children(2)$ and $j' \in  \children(3)$. Initially $\pi(0)= \pi(1) = \items$ (see Figure~\ref{fig: initial pi mgys}). 

\begin{figure}[H]
    \centering
    \begin{tikzpicture}[
            scale=0.85,
            level distance=12mm,
            every node/.style={rectangle, rounded corners=2pt, draw, inner sep=5pt},
            level 1/.style={sibling distance=43mm},
            level 2/.style={sibling distance=21mm},
            level 3/.style={sibling distance=6mm},
            edge from parent/.style={draw, -stealth}
        ]
        
        \node (1) {1}
            child { node (0) {0} }
            child { node (2) {2}
                child { node (4) {4} }
                child { node (5) {5} }
            }
            child { node (3) {3}
                child { node (6) {6} }
                child { node (7) {7} }
            };
    
    \node[draw=none, text=black, above=0.1cm of 1] 
        {$\pi(1) = \{g_1, g_2, g_3, g_4, g_5\}$};

    \node[draw=none, text=black, above=0.1cm of 0] 
        {$\pi(0) = \{g_1, g_2, g_3, g_4, g_5\}$};
    
    \node[draw=none, text=black, left=0.1cm of 2] 
        {$\pi(2) = \emptyset$};
    
    \node[draw=none, text=black, right=0.1cm of 3] 
        {$\pi(3) = \emptyset$};

    \node[draw=none, text=black, below=0.1cm of 4] 
        {$\pi(4) = \emptyset$};
    
    \node[draw=none, text=black, below=0.1cm of 5] 
        {$\pi(5) = \emptyset$};

    \node[draw=none, text=black, below=0.1cm of 6] 
        {$\pi(6) = \emptyset$};

    \node[draw=none, text=black, below=0.1cm of 7] 
        {$\pi(7) = \emptyset$};
    \end{tikzpicture}
    \caption{Initial multilevel allocation  $\pi$.}
    \label{fig: initial pi mgys}
\end{figure}

At the first step, leaf $4$ is selected: node $2$ is prioritized at the root  (since ties are broken using least index), and among its children, leaf $4$ is chosen by the same rule. A path augmentation occurs:  $4$ selects $g_1$. Bundles are modified accordingly: $\pi(4) = \pi(2) = \{g_1\}$ and $\pi(0) = \{g_2,\ldots,g_5\}$ (see Figure~\ref{fig: pi iteration 1}).

\begin{figure}[H]
    \centering
    \begin{tikzpicture}[
            scale=0.85,
            level distance=12mm,
            every node/.style={rectangle, rounded corners=2pt, draw, inner sep=5pt},
            level 1/.style={sibling distance=43mm},
            level 2/.style={sibling distance=21mm},
            level 3/.style={sibling distance=6mm},
            edge from parent/.style={draw, -stealth}
        ]
        
        \node (1) {1}
            child { node (0) {0} }
            child { node (2) {2}
                child { node (4) {4} }
                child { node (5) {5} }
            }
            child { node (3) {3}
                child { node (6) {6} }
                child { node (7) {7} }
            };
    
    \node[draw=none, text=black, above=0.1cm of 1] 
        {$\pi(1) = \{g_1, g_2, g_3, g_4, g_5\}$};

    \node[draw=none, text=black, above=0.1cm of 0] 
        {$\pi(0) = \{g_2, g_3, g_4, g_5\}$};
    
    \node[draw=none, text=black, left=0.1cm of 2] 
        {$\pi(2) = \{g_1\}$};
    
    \node[draw=none, text=black, right=0.1cm of 3] 
        {$\pi(3) = \emptyset$};

    \node[draw=none, text=black, below=0.1cm of 4] 
        {$\pi(4) = \{g_1\}$};
    
    \node[draw=none, text=black, below=0.1cm of 5] 
        {$\pi(5) = \emptyset$};

    \node[draw=none, text=black, below=0.1cm of 6] 
        {$\pi(6) = \emptyset$};

    \node[draw=none, text=black, below=0.1cm of 7] 
        {$\pi(7) = \emptyset$};
    \end{tikzpicture}
    \caption{The multilevel allocation  $\pi$ after iteration 1 of {\bf MGYS}.}
    \label{fig: pi iteration 1}
\end{figure}

At the second step, leaf $6$ is selected: $\phi_1(\pi,3) > \phi_1(\pi,2)$, and among the children of $3$, leaf $6$ is selected by the tie-breaking rule. A path augmentation then occurs: $6$ picks $g_2$, which updates $\pi$ as follows: $\pi(6) = \pi(3) = \{g_2\}$ and $\pi(0) = \{g_3,g_4,g_5\}$.

        
    

    
    

    



At the next iteration, leaf $5$ is selected: $\phi_1(\pi,2) > \phi_1(\pi,3)$ and $\phi_2(\pi,5) > \phi_2(\pi,4)$. It chooses $g_3$, leading to the following updates: $\pi(5) = \{g_3\}$, $\pi(2) = \{g_1,g_3\}$, $\pi(0) = \{g_4,g_5\}$.

        
    

    
    

    



Then $4$ is selected: $\phi_1(\pi,2) > \phi_1(\pi,3)$ and $4$ is chosen by the tie-breaking rule. As no transfer path exists from $4$ to $\pi(0)$, leaf $4$ is removed from $\tree$. The same happens with $5$ at the next step. Then, node $2$ is removed from the tree at the same time as $5$, since it has no remaining children (see Figure~\ref{fig: pi iterations 4 and 5}).

\begin{figure}[H]
    \centering
    \begin{tikzpicture}[
            scale=0.85,
            level distance=12mm,
            every node/.style={rectangle, rounded corners=2pt, draw, inner sep=5pt},
            level 1/.style={sibling distance=48mm},
            level 2/.style={sibling distance=25mm},
            level 3/.style={sibling distance=6mm},
            edge from parent/.style={draw, -stealth}
        ]
        
        \node (1) {1}
            child { node (0) {0} }
            child { node (2) [fill=gray!30] {2}
                child { node (4) [fill=gray!30] {4} }
                child { node (5) [fill=gray!30] {5} }
            }
            child { node (3) {3}
                child { node (6) {6} }
                child { node (7) {7} }
            };
    
    \node[draw=none, text=black, above=0.1cm of 1] 
        {$\pi(1) = \{g_1, g_2, g_3, g_4, g_5\}$};

    \node[draw=none, text=black, above=0.1cm of 0] 
        {$\pi(0) = \{g_4, g_5\}$};
    
    \node[draw=none, text=black, left=0.1cm of 2] 
        {$\pi(2) = \{g_1, g_3\}$};
    
    \node[draw=none, text=black, right=0.1cm of 3] 
        {$\pi(3) = \{g_2\}$};

    \node[draw=none, text=black, below=0.1cm of 4] 
        {$\pi(4) = \{g_1\}$};
    
    \node[draw=none, text=black, below=0.1cm of 5] 
        {$\pi(5) = \{g_3\}$};

    \node[draw=none, text=black, below=0.1cm of 6] 
        {$\pi(6) = \{g_2\}$};

    \node[draw=none, text=black, below=0.1cm of 7] 
        {$\pi(7) = \emptyset$};
    \end{tikzpicture}
    \caption{The multilevel allocation  $\pi$ after iteration 5 of {\bf MGYS}. Deleted nodes are shown in grey.}
    \label{fig: pi iterations 4 and 5}
\end{figure}

Leaf $7$ is then selected (as $\phi_3(\pi,7) >  \phi_3(\pi,6)$), and picks $g_4$: $\pi(7) = \{g_4\}$, $\pi(3) = \{g_2, g_4\}$, $\pi(0) = \{g_5\}$. Next, leaf $6$ is selected by the tie-breaking rule and removed from $\tree$ as no transfer path exists. Finally, $7$ is picked and selects $g_5$. {\bf MGYS} terminates after removing nodes $7$ and $3$, sine nodes $0$ and $1$ are the only remaining ones (see Figure~\ref{fig: pi iterations 8 and 9}). The multilevel allocation returned is the following: $\pi^*(1) = \items$, $\pi^*(2)=\{g_1, g_3\}$,  $\pi^*(3)=\{g_2, g_4, g_5\}$, $\pi^*(4)=\{g_1\}$, $\pi^*(5)=\{g_3\}$, $\pi^*(6)=\{g_2\}$, and $\pi^*(7)= \{g_4, g_5\}$.

\begin{figure}[H]
    \centering
    \begin{tikzpicture}[
            scale=0.85,
            level distance=12mm,
            every node/.style={rectangle, rounded corners=2pt, draw, inner sep=5pt},
            level 1/.style={sibling distance=48mm},
            level 2/.style={sibling distance=25mm},
            level 3/.style={sibling distance=6mm},
            edge from parent/.style={draw, -stealth}
        ]
        
        \node (1) {1}
            child { node (0) {0} }
            child { node (2) [fill=gray!30] {2}
                child { node (4) [fill=gray!30] {4} }
                child { node (5) [fill=gray!30] {5} }
            }
            child { node (3) [fill=gray!30] {3}
                child { node (6) [fill=gray!30] {6} }
                child { node (7) [fill=gray!30] {7} }
            };
    
    \node[draw=none, text=black, above=0.1cm of 1] 
        {$\pi(1) = \{g_1, g_2, g_3, g_4, g_5\}$};

    \node[draw=none, text=black, above=0.1cm of 0] 
        {$\pi(0) = \emptyset$};
    
    \node[draw=none, text=black, left=0.1cm of 2] 
        {$\pi(2) = \{g_1, g_3\}$};
    
    \node[draw=none, text=black, right=0.1cm of 3] 
        {$\pi(3) = \{g_2, g_4, g_5\}$};

    \node[draw=none, text=black, below=0.1cm of 4] 
        {$\pi(4) = \{g_1\}$};
    
    \node[draw=none, text=black, below=0.1cm of 5] 
        {$\pi(5) = \{g_3\}$};

    \node[draw=none, text=black, below=0.1cm of 6] 
        {$\pi(6) = \{g_2\}$};

    \node[draw=none, text=black, below=0.1cm of 7] 
        {$\pi(7) = \{g_4, g_5\}$};
    \end{tikzpicture}
    \caption{The multilevel allocation $\pi$ after iteration 9 of {\bf MGYS}. Deleted nodes are shown in grey.}
    \label{fig: pi iterations 8 and 9}
\end{figure}

\end{example}

Note that {\bf MGYS} terminates since, at each step, either nodes are deleted from $\tree$ due to the non-existence of relevant transfer paths or an item is removed from $\pi(0)$ by path augmentation, thereby reducing the number of possible transfer paths (see lines 7-15). In particular, if $\pi(0)=\emptyset$, we know that no transfer paths remain, and the algorithm deletes nodes until the while-loop condition is satisfied (see line 4). Now, before showing that {\bf MGYS} returns a multilevel utilitarian-optimal allocation w.r.t. $v$, we introduce some useful lemmas.

\begin{lemma} \label{lemma: general bundle = union enfants}
    $\pi(i) = \cup_{j \in\children(i)} \pi(j)$ for any non-redundant multilevel allocation $\pi \in \mallocations$ and any node $i \in \internal$.
\end{lemma}

\begin{proof}
For any $\pi \in \mallocations$ and any node $i \in \nodes$, we have $\pi(i) \supseteq \cup_{j \in \children(i)} \pi(j)$ by definition. If $\pi(i) \supset \cup_{j \in \children(i)} \pi(j)$, then it means that there exists $g \in \pi(i)$ that is not allocated to any children of $i$. Moreover, since $v_i(\pi)=\sum_{j \in \children(i)} v_j(\pi)$, we know that the utility of $v_i$ only depends on the goods allocated to its children. Therefore $v_i(\pi) = v_i(\pi_{-g})$ which implies that $\pi$ is not non-redundant.
\end{proof}

\begin{lemma} \label{lemma: multilevel non-redundancy}
 A multilevel allocation $\pi$ is non-redundant if and only if $v_i(\pi)= |\pi(i)| $ for all $i\in \nodes$.
\end{lemma}

\begin{proof} Let $\pi$ be a non-redundant multilevel allocation. We show that $v_i(\pi) = |\pi(i)|$ for all $i \in \nodes$. If $i$ is a leaf, then we know that $u_i$ is a MRF and therefore has binary marginal gains. This directly implies  that $u_i(\pi(i))=v_i(\pi)= |\pi(i)|$. If $i$ is an internal node, then $v_i(\pi)=\sum_{j \in \children(i)} v_j(\pi) = \sum_{x \in \leaves(i)} v_x(\pi) = \sum_{x \in \leaves(i)} |\pi(x)|$ as just shown. Moreover, we know that $\pi(i) = \cup_{j \in \children} \pi(j)$ due to Lemma~\ref{lemma: general bundle = union enfants}, and therefore $\pi(i) = \cup_{x \in \leaves(i)} \pi(x)$ by associativity of the union operator. Since $\pi(x) \cap \pi(x') = \emptyset$ for any $x,x' \in \leaves(i)$ by definition of $\pi$, then we obtain $v_i(\pi)=|\pi(i)|$. 

Now let $\pi$ be a multilevel allocation satisfying $v_i(\pi)= |\pi(i)| $ for all $i\in \nodes$. We prove that $\pi$ is non-redundant. For any leaf $i \in \leaves$, $v_i(\pi) = u_i(\pi(i))$. Since $u_i$ is a MRF then $u_i(\pi(i)) = |\pi(i)|$ implies that all $g \in \pi(i)$ has a marginal gain equal to $1$. Hence, $v_i(\pi) - v_i(\pi_{-g})) > 0$ for any $g \in \pi(i)$. For any internal node $i \in \internal$,  $v_i(\pi)=\sum_{x \in \leaves(i)} v_x(\pi) = \sum_{x \in \leaves(i)} |\pi(x)|$. Since $\pi(i') \supseteq \cup_{j \in \children(i')} \pi(j)$ for any $i' \in \internal$ by definition of $\pi$, then $\pi(i) \supseteq \cup_{x \in \leaves(i)} \pi(x)$ by associativity of the union operator. Since we have just shown that $v_i(\pi)= |\pi(i)| =  \sum_{x \in \leaves(i)} |\pi(x)|$, then we can derive $\pi(i) = \cup_{x \in \leaves(i)} \pi(x)$ from the fact that $\pi(x) \cap \pi(x') = \emptyset$ for any two leaves $x,x' \in \leaves(i)$ (by definition of $\pi$). For any $g \in \pi(i)$, there exists a unique leaf $y \in \leaves(i)$ such that $g \in \pi(y)$, and we have just proved that $v_y(\pi) - v_y(\pi_{-g}) > 0$ necessarily holds. Therefore $v_i(\pi) - v_i(\pi_{-g}) = \sum_{x \in \leaves(i)} v_x(\pi) - \sum_{x \in \leaves(i)} v_x(\pi_{-g}) = v_y(\pi) - v_y(\pi_{-g}) > 0$. Hence, we just proved that any item had a positive marginal gain, which concludes the proof.
\end{proof}

We adapt Lemma 2.2 and Lemma 2.3 from \cite{Viswanathan_Zick2023b} to our multilevel setting:
\begin{lemma} \label{lemma: transfer path}
    Let $\pi, \pi' \in \mallocations$ be two non-redundant multilevel allocations. Given a leaf $x \in \leaves$ such that $|\pi(x)| < |\pi'(x)|$, there exists a path in $G(\pi)$ from $F_{x}(\pi)$ to $\pi(y)$ for some $y \in \leaves$ such that $|\pi(y)| > |\pi'(y)|.$
\end{lemma}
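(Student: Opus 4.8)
The plan is to reduce the statement to the monolevel augmenting-path result, Lemma~2.3 of \citep{Viswanathan_Zick2023b}, by restricting the two multilevel allocations to the leaves of $\mtree$. The cited lemma is stated for a standard (monolevel) problem in which agents with MRF valuations hold clean bundles, so the whole task is to exhibit such a monolevel problem whose exchange graph is literally $G(\pi)$, and then translate the hypotheses and conclusion back and forth.

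First I would observe that $G(\pi)$, the sets $F_x(\pi)$, and the notion of transfer path are all defined purely in terms of the leaf bundles $(\pi(x))_{x \in \leaves}$ and the leaf utility functions $(u_x)_{x \in \leaves}$: an arrow exists only between two items each owned by some leaf, and its presence depends solely on the indifference condition $u_x((\pi(x) \setminus \{g\}) \cup \{g'\}) = u_x(\pi(x))$. The bundles of internal nodes never enter these definitions. Consequently $G(\pi)$ coincides exactly with the exchange graph of the \emph{monolevel} allocation problem whose agents are the leaves $\leaves$, whose items are $\items$, whose valuations are the MRFs $(u_x)_{x \in \leaves}$, and whose allocation assigns bundle $\pi(x)$ to each leaf $x$.

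Next I would check that the hypotheses of the monolevel lemma hold for the leaf restrictions. Because $\pi$ and $\pi'$ are multilevel allocations, the sibling-disjointness property together with the containment $\pi(j) \subseteq \pi(i)$ for every child $j$ of $i$ propagates down the tree, so the leaf bundles $(\pi(x))_{x \in \leaves}$ (resp.\ $(\pi'(x))_{x \in \leaves}$) are pairwise disjoint and thus form valid monolevel allocations of subsets of $\items$; any items left at internal nodes are simply unallocated from the leaves' viewpoint, which causes no difficulty. Moreover, since $\pi$ and $\pi'$ are non-redundant, Lemma~\ref{lemma: multilevel non-redundancy} gives $u_x(\pi(x)) = |\pi(x)|$ and $u_x(\pi'(x)) = |\pi'(x)|$ for every leaf $x$, which is precisely the cleanness condition of the monolevel setting (each leaf holds an independent set of its matroid). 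The hypothesis $|\pi(x)| < |\pi'(x)|$ then reads $u_x(\pi(x)) < u_x(\pi'(x))$, the exact premise of Lemma~2.3.

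Finally I would invoke Lemma~2.3 of \citep{Viswanathan_Zick2023b} on these two clean monolevel leaf allocations: it yields a path in $G(\pi)$ from $F_x(\pi)$ to $\pi(y)$ for some leaf $y$ with $|\pi(y)| > |\pi'(y)|$, which is exactly the claimed conclusion. The only substantive step beyond bookkeeping is the identification in the first paragraph—confirming that the multilevel exchange graph is genuinely identical to a monolevel one so that the cited lemma applies verbatim; once this is established, the translation of hypotheses and conclusion goes through directly via Lemma~\ref{lemma: multilevel non-redundancy}.
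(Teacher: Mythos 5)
Your proposal is correct and takes essentially the same route as the paper, which offers no detailed argument of its own and simply declares the lemma ``a direct application of Lemma 2.3 in \cite{Viswanathan_Zick2023b} to our multilevel setting.'' Your reduction---observing that $G(\pi)$ and $F_x(\pi)$ depend only on the pairwise-disjoint leaf bundles and leaf MRFs, and that non-redundancy yields the cleanness hypothesis via Lemma~\ref{lemma: multilevel non-redundancy}---spells out precisely why that application is legitimate.
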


\begin{lemma} \label{lemma: path augmentation}
    Let $\pi \in \mallocations$ be a non-redundant multilevel allocation. Let $T = (g_1, \ldots, g_t)$ be one of the shortest path from $F_{x}(\pi)$ to $\pi(y)$ for some $x, y \in \leaves$. Let $\pi' \in \mallocations$ be the multilevel allocation obtained from $\pi$ by performing the path augmentation corresponding to $T$. Multilevel allocation $\pi '$ is non redundant and verifies for each $i \in \nodes$:
    $$
    |\pi'(i)| =\left\{
      \begin{array}{ll}
        |\pi(i)| + 1 & \mbox{if } i \in \ancestors(x) \cup \{x\} \backslash \ancestors(y) \\
        |\pi(i)| - 1 & \mbox{if } i \in \ancestors(y) \cup \{y\} \backslash \ancestors(x) \\
        |\pi(i)| & \mbox{otherwise }
      \end{array}
    \right.
    $$
\end{lemma}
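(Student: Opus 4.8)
The plan is to reduce everything to the monolevel analysis at the leaves and then propagate the consequences upward through the ancestor-update rule. Observe first that the leaves $\leaves$, each equipped with its MRF $u_z$, together with the assignment $z \mapsto \pi(z)$, form exactly a standard (monolevel) MRF allocation; since $\pi$ is non-redundant, Lemma~\ref{lemma: multilevel non-redundancy} gives $u_z(\pi(z)) = |\pi(z)|$ for every leaf $z$, so this leaf allocation is clean. As $T$ is a shortest path from $F_x(\pi)$ to $\pi(y)$ and the augmentation acts on the leaves exactly as in \cite{Viswanathan_Zick2023b}, Lemma~2.2 of \cite{Viswanathan_Zick2023b} applies and yields: (i) the augmented leaf allocation is again clean, i.e.\ $u_z(\pi'(z)) = |\pi'(z)|$ for all $z \in \leaves$; and (ii) $|\pi'(x)| = |\pi(x)|+1$, $|\pi'(y)| = |\pi(y)|-1$, and $|\pi'(z)| = |\pi(z)|$ for every other leaf. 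Intuitively, the augmentation moves each $g_k$ (for $k \geq 2$) to the owner of $g_{k-1}$ and $g_1$ to $x$, so each intermediate leaf loses and gains exactly one item, $x$ gains $g_1$, and $y$ loses $g_t$.

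Next I would record a structural invariant: for any non-redundant multilevel allocation, $\pi(i) = \cup_{z \in \nleaves}\pi(z)$ for every internal node $i$. This follows because the leaf bundles inside $\mtree_i$ are pairwise disjoint (a consequence of sibling-disjointness together with the nested containment $\pi(\text{child}) \subseteq \pi(\text{parent})$), so $|\cup_{z \in \nleaves}\pi(z)| = \sum_{z \in \nleaves}|\pi(z)| = \sum_{z \in \nleaves}u_z(\pi(z)) = v_i(\pi) = |\pi(i)|$ by Lemma~\ref{lemma: multilevel non-redundancy}; combined with $\pi(i) \supseteq \cup_{z \in \nleaves}\pi(z)$, this forces equality of the two bundles.

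The key step is to show this union invariant is preserved by each elementary item transfer of the augmentation. When a single item $g$ moves from leaf $z'$ to leaf $z$, I would examine an internal node $i$ according to its position relative to $\ancestors(z)$ and $\ancestors(z')$: if $i$ is a common ancestor of both, $g$ stays in $\pi(i)$ and the rule leaves $i$ untouched; if $i \in \ancestors(z)\setminus\ancestors(z')$, then $g$ must newly enter $\pi(i)$ and the rule adds it; if $i \in \ancestors(z')\setminus\ancestors(z)$, then $g$ must leave $\pi(i)$ and the rule removes it; otherwise $g$ is absent both before and after and $i$ is untouched. Since the items of $T$ are distinct and each genuinely changes owner (consecutive owners along $T$ differ), applying this transfer by transfer---starting from the invariant for $\pi$---establishes $\pi'(i) = \cup_{z \in \nleaves}\pi'(z)$ for all internal $i$, and in passing that $\pi'$ is a legitimate multilevel allocation.

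The two conclusions then drop out. For non-redundancy, for every internal $i$ we get $v_i(\pi') = \sum_{z \in \nleaves}u_z(\pi'(z)) = \sum_{z \in \nleaves}|\pi'(z)| = |\pi'(i)|$ using leaf fact (i) and disjointness, so Lemma~\ref{lemma: multilevel non-redundancy} yields non-redundancy of $\pi'$. For the cardinalities, $|\pi'(i)| - |\pi(i)| = \sum_{z \in \nleaves}(|\pi'(z)| - |\pi(z)|) = \mathbb{1}[x \in \nleaves] - \mathbb{1}[y \in \nleaves] = \mathbb{1}[i \in \ancestors(x)] - \mathbb{1}[i \in \ancestors(y)]$ by fact (ii); splitting on whether $i$ is an ancestor of $x$, of $y$, of both, or of neither recovers the three displayed cases, while the leaf cases $i = x$ and $i = y$ are handed to us directly by (ii). The main obstacle I anticipate is precisely the bookkeeping in the invariant-preservation step---checking the four ancestor cases, confirming that common ancestors of $x$ and $y$ see a net change of zero, and ensuring no intermediate transfer breaks disjointness; the rest is a faithful transcription of the monolevel Lemma~2.2 plus additivity over leaves.
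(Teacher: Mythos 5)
Your proof is correct and follows exactly the route the paper intends: the paper in fact contains no written proof of this lemma (Appendix~F only proves Lemmas~7, 8 and 11), asserting it instead as a direct application of Lemma~2.2 of \cite{Viswanathan_Zick2023b} combined with the ancestor-update rule. Your reduction to the clean monolevel allocation at the leaves, the invariant $\pi(i)=\cup_{z\in\nleaves}\pi(z)$ for non-redundant allocations (the paper's Lemma~7 pushed down to the leaves), and the leaf-wise summation of cardinality changes supply precisely the details the paper leaves implicit, so this is the same approach, fully written out.
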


\begin{restatable}{proposition}{mgysnonredundant} \label{lemma: MGYS non redundant}
    At any iteration of \textbf{MGYS}, multilevel allocation $\pi \in \mallocations$ maintained by the algorithm is non-redundant.
\end{restatable}

\begin{proof}
    At the beginning of the algorithm, $\pi(0)=\pi(1)=\items$ and $\pi(i)=0$ for all $i \in \nodes\backslash\{0,1\}$. Since by definition $u_0(S) = |S|$ for any bundle of items $S \subseteq \items$, then $v_0(\pi) - v_0(\pi_{-g}) = 1 >0$ for all $g \in \pi(0)$. Moreover, for all $g \in \pi(1)$, $v_{1}(\pi) - v_{1}(\pi_{-g}) =  \sum_{j \in \children(1)} v_j(\pi)- v_j(\pi_{-g}) = v_0(\pi) - v_0(\pi_{-g}) = 1 >0$. Therefore, $\pi$ is initially non-redundant. 
    Then, at each iteration step, we only modify $\pi$ through path augmentation. By Lemma~\ref{lemma: path augmentation}, the resulting allocation remains non-redundant.
\end{proof}

We now prove that \textbf{MGYS} guarantees multilevel utilitarian optimality w.r.t. $v$, but does not ensure multilevel fairness.

\begin{theorem}
    {\bf MGYS} returns a multilevel allocation that is multilevel utilitarian optimal w.r.t. $v$.
\end{theorem}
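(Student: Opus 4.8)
The plan is to split the argument into a structural reduction and an adaptation of the correctness proof of \textbf{GYS}. Write $W(\pi) = \sum_{x \in \leaves} u_x(\pi(x))$ for the total welfare collected at the leaves. First I would prove that it suffices to establish the single global statement that $\pi^*$ maximizes $W$ over $\mallocations$. Indeed, by additivity we have $\sum_{j \in \children} v_j(\pi) = \sum_{x \in \nleaves} u_x(\pi(x))$ for every internal node $i$, so the optimality condition of Definition~\ref{def: multilevel usw} at node $i$ only constrains how the items of $\pi^*(i)$ are spread over the leaves $\nleaves$ of its subtree. The key observation is that $\pi^*(i)$ is disjoint from $\pi^*(x)$ for every leaf $x \notin \nleaves$: the lowest common ancestor of $i$ and $x$ sends them into two distinct children, whose bundles are disjoint, and bundles only shrink when descending the tree. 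Consequently, given any competitor $\pi'$ with $\pi'\vert_{\children} \in \allocations^{\pi^*(i)}_{\children}$, I can graft the subtree-$i$ part of $\pi'$ onto $\pi^*$ outside that subtree; the result is a valid multilevel allocation differing from $\pi^*$ only inside the subtree of $i$, so its welfare differs from $W(\pi^*)$ only through the terms indexed by $\nleaves$, and $W$-optimality of $\pi^*$ immediately gives $\sum_{j \in \children} v_j(\pi^*) \geq \sum_{j \in \children} v_j(\pi')$, i.e. optimality at node $i$.

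It then remains to show that $\pi^*$ maximizes $W$. By Lemma~\ref{lemma: MGYS non redundant} the allocation handled by \textbf{MGYS} stays non-redundant, so Lemma~\ref{lemma: multilevel non-redundancy} gives $v_i(\pi) = |\pi(i)|$ at every step; in particular $W(\pi) = m - |\pi(0)|$, and maximizing welfare amounts to draining node $0$ as far as possible. Projecting the run onto the leaves and the pool node $0$, each iteration coincides with a \textbf{GYS} step: by Lemma~\ref{lemma: path augmentation} (with $0$ as the target) a path augmentation increases the bundle size of the chosen leaf by one and decreases $|\pi(0)|$ by one while preserving non-redundancy; and, by the matroid-union reasoning behind \cite{Viswanathan_Zick2023b} for which Lemma~\ref{lemma: transfer path} is the central ingredient, a non-redundant allocation that is not welfare-maximal always admits a leaf with an augmenting path to the pool, so the welfare can still grow. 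The two features specific to \textbf{MGYS}, namely the top-down $\mathtt{Select\_Leaf}$ rule and the ancestor updates, affect neither the effect of an augmentation on leaf-bundle sizes nor the stopping condition, and are therefore irrelevant to this welfare argument.

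I expect the delicate point to be the handling of leaf removals. A removed leaf keeps (freezes) its bundle and leaves $\mtree$, so its items can never be reallocated, and I must rule out that this freezing locks \textbf{MGYS} into a suboptimal allocation. Following \cite{Viswanathan_Zick2023b}, the crux is a monotonicity property enforced by augmenting along \emph{shortest} paths: once a leaf has no transfer path to $\pi(0)$ -- the very condition triggering its removal -- no later augmentation performed for another leaf can recreate such a path for it. Granting this, at termination no leaf admits a transfer path to the pool, which is exactly the matroid-union optimality certificate underlying \textbf{GYS} and forces $W(\pi^*)$ to be maximal. This monotonicity, inherited from the single-level analysis, is where the substance lies; the reduction of the first paragraph and the bookkeeping with the three cited lemmas are then routine, and together they yield that $\pi^*$ is multilevel utilitarian optimal w.r.t. $v$ at every internal node.
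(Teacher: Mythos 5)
Your opening reduction is sound and worth having: since $\sum_{j \in \children} v_j(\pi) = \sum_{x \in \nleaves} u_x(\pi(x))$ and the grafting of a competitor's subtree-$i$ part onto $\pi^*$ yields a valid multilevel allocation, multilevel utilitarian optimality does collapse to the single claim that $\pi^*$ maximizes $W(\pi)=\sum_{x\in\leaves}u_x(\pi(x))$, equivalently (by Lemmas~\ref{lemma: multilevel non-redundancy} and~\ref{lemma: MGYS non redundant}) that $|\pi^*(0)|$ is as small as possible. The paper uses this equivalence implicitly. The problem is that both of the remaining steps in your outline are asserted rather than proved, and one of them is exactly where the paper's entire proof lives.

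Concretely: (i) Your claim that ``a non-redundant allocation that is not welfare-maximal always admits a leaf with an augmenting path to the pool'' does not follow from Lemma~\ref{lemma: transfer path}, which only guarantees a path from $F_x(\pi)$ to $\pi(z)$ for \emph{some} $z$ with $|\pi(z)|>|\pi'(z)|$ --- and that $z$ need not be node $0$. Forcing the path to end at the pool is the substance of the proof. The paper does it by picking a non-redundant optimal $\pi'$ minimizing $\sum_{x\in\leaves}|u_x(\pi^*(x))-u_x(\pi'(x))|$, showing via a path augmentation on $\pi'$ (Lemma~\ref{lemma: path augmentation}) that any leaf over-endowed by $\pi^*$ relative to $\pi'$ would contradict either optimality or minimality of $\pi'$, and only then concluding that $\pi'$ dominates $\pi^*$ on every leaf except $0$, so that the Lemma~\ref{lemma: transfer path} path has nowhere to go but $\pi(0)$. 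Calling this ``routine bookkeeping'' misplaces where the work is. (ii) Your route also needs the lockout-monotonicity property (once a leaf has no transfer path to $\pi(0)$, no later augmentation restores one) to pass from ``each leaf had no path when it was removed'' to ``no leaf has a path at termination.'' That property is not among the lemmas available here, and its transfer to the multilevel setting would have to be argued. The paper sidesteps it entirely: it invokes the no-path condition only at the snapshot $\pi_0$ taken when the relevant leaf $y$ is removed, using the monotone evolution of bundle sizes ($|\pi_0(x)|\le|\pi^*(x)|$ for leaves, $|\pi_0(0)|\ge|\pi^*(0)|$) to make the domination inequalities hold at that moment. So the skeleton of your argument is compatible with the paper's, but as written the proof has a genuine hole at step (i) and an avoidable extra dependency at step (ii).
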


\begin{proof}
    Assume by contradiction that $\pi^*$ is not multilevel utilitarian optimal w.r.t. $v$. Let $\pi' \in \mallocations$ be a non-redundant multilevel utilitarian optimal allocation w.r.t. $v$ that minimizes $\sum_{x \in \leaves} |u_x(\pi^*(x)) - u_x(\pi'(x))|$ among all such allocations. Since $\pi^*$ is non-redundant (by Proposition \ref{lemma: MGYS non redundant}) and not multilevel utilitarian optimal (by hypothesis), $|\pi^*(y)| < |\pi'(y)|$ holds for some $y \in \leaves$ (by Lemma \ref{lemma: multilevel non-redundancy}). Assume $|\pi^*(x)|  \leq   |\pi'(x)|$ holds for all other leaves $x \in \leaves \setminus \{y,0\}$. Since $\pi^*, \pi'$ are non-redundant, we must have $|\pi^*(0)| > |\pi'(0)|$. Let $\pi_0$ be the multilevel allocation at the beginning of the iteration where $y$ is removed from $\tree$. Note that $\pi_0$ is non-redundant by Proposition~\ref{lemma: MGYS non redundant}. By construction: $|\pi_0(y)| = |\pi^*(y)| < |\pi'(y)|$, $|\pi_0(x)| \leq |\pi^*(x)| \leq |\pi'(x)|$ for any $x \in \leaves \setminus \{0\}$ and $|\pi_0(0)| \geq |\pi^*(0)| > |\pi'(0)|$. By Lemma~\ref{lemma: transfer path}, there is a transfer path from $F_y(\pi_0)$ to $\pi_0(0)$ in $G(\pi_0)$, contradicting the definition of $\pi_0$.     Hence, we must have $|\pi^*(y')| > |\pi'(y')|$ for some leaf $y' \in  \leaves \setminus \{0\}$. By Lemma~\ref{lemma: transfer path}, there is a transfer path from $F_{y'}(\pi')$ to $\pi'(z)$ in $G(\pi')$ for some $z \in \leaves$ with $|\pi^*(z)| < |\pi'(z)|$. By Lemma~\ref{lemma: path augmentation}, transferring goods along this path leads to a non-redundant multilevel allocation $\pi_1$ s.t. $|\pi_1(y')| = |\pi'(y')| + 1$, $|\pi_1(z)| = |\pi'(z)| - 1$ and $|\pi_1(x')| = |\pi'(x')|$ for all $x'  \in  \leaves \backslash \{y',z\}$. If $z  =  0$, then $v_{1}(\pi_1) > v_{1}(\pi')$ (since $y'$ increased its utility without reducing any other agent’s), contradicting the root-optimality of $\pi'$. If $z \neq 0$, then we contradict its minimality assumption since $\sum_{x \in \leaves} |u_x(\pi^*(x)) - u_x(\pi'(x))| > \sum_{x \in \leaves} |u_x(\pi^*(x)) - u_x(\pi_1(x))|$.
\end{proof}

\begin{theorem}
    {\bf MGYS} does not necessarily return a multilevel $\Psi$-maximizing allocation w.r.t. $v$, even when $\Psi_i \in \fair$ for all nodes $i \in \internal$.\color{black}
\end{theorem}

\begin{proof} 
Consider Example~\ref{example: sma}, and let $\pi^*_1$ and $\pi^*_2$ be the multilevel allocations returned by \textbf{SMA} and \textbf{MGYS}, respectively (see Examples~\ref{example: sma}-\ref{example: mgys}). Both $\pi^*_1 \vert_{\children(1)}$ and $\pi^*_2 \vert_{\children(1)}$ belong to $\allocations^\items_{\children(1)}$. The weighted Nash welfare is $3^5  \times  2^2  =  972$ for $\pi^*_1 \vert_{\children(1)}$ and only $2^5 \times  3^2  =  288$ for $\pi^*_2 \vert_{\children(1)}$. Thus $\pi^*_2 \vert_{\children(1)}$ is not $\Psi_1$-maximizing w.r.t. $v$. Hence, $\pi^*_2$ is not multilevel $\Psi$-maximizing w.r.t. $v$.
Note that this instance would still constitute a counterexample if $\Psi_1$ was Lorenz dominance, weighted leximin, or weighted p-means welfare (with $p=1$, for instance). 
Similar counterexamples can be obtained in unweighted settings.
\end{proof}

Note that, in \textbf{MGYS}, only the root may leave  items unallocated. Indeed, items are chosen directly by the leaves, and $\pi(i) = \cup_{j \in \children(i)} \pi(j)$ for any internal node $i \in \internal$ by construction. The discarded items are those allocated to node $0$. In {\bf SMA}, when {\bf GYS} is applied at each internal node, items may be discarded only during the initial call at the root (as allowed by {\bf GYS}), but not in subsequent calls. Indeed, at the root, {\bf GYS} may leave items unallocated to ensure non-redundancy, making the root allocation $A^{1}$ non-redundant and utilitarian-optimal with respect to $\hat{v}$. Moreover, for every child $i$ of the root, $\hat{v}_i$ is a MRF (hence has binary marginal gains). Therefore, the non-redundancy of $A^{1}$ implies $\hat{v}_i(A^{1}(i))=|A^{1}(i)|$, and its local allocation $A^i$ must achieve the same value (as it is utilitarian optimal w.r.t. $\hat{v}$).  
As $\hat{v}_j$ has binary marginal gains for every child $j$ of $i$, all items assigned to node $i$ must be allocated to its children (otherwise, total utility would be strictly smaller than $|A^{1}(i)|$). By induction, only the root may leave items unallocated.  

\medskip

We now state the worst-case time complexity of {\bf MGYS}. Let $T_u$ denote the worst-case time to evaluate a leaf’s utility (i.e., for a \emph{given} MRF), and let $b$ be the maximum dimension of gain functions $\phi_i$ for $i \in \internal$. For the fairness notions considered here, $b$ is in $O(1)$.

\begin{theorem} \label{theorem: complexity mgys}
    The worst-case time complexity of {\bf MGYS} is  $O((n + m)(m T_u \log m +  mn+ nb)).$
\end{theorem}

\begin{proof} The algorithm stops whenever nodes $1$ and $0$ are the only remaining nodes. At each iteration step of \textbf{MGYS}, we either transfer one item from node $0$ to a leaf or delete a leaf which has no transfer path to $0$. Note that, when the deletion of a leaf implies that an internal node has no more leaves in its subtree, this internal node is deleted at the same time. Note also that, if $\pi(0)=\emptyset$, then there exist no more transfer path to $0$. Therefore, the number of iterations is upper bounded by $m$ the number of items plus the number of leaves. Hence, the worst case is the monolevel case where every node (but the root) is a leaf, i.e. the number of iterations is in $O(m + n)$. 

Every iteration step starts with the selection of a leaf, which basically only compares gain functions of siblings. As a consequence, the monolevel case is one possible worst case, requiring the comparison of all nodes (but the root). Therefore this operation is in $O(n(b + T_\phi))$, where $T_\phi$ is the worst-case complexity of computing the value of a gain function. Note that $T_\phi$ is in $O(T_u+b)$ in general, but {\bf MGYS} can be implemented so that it is in $O(b)$ as observed in \cite{Viswanathan_Zick2023b}. Therefore the selection step is in $O(nb)$. Then, the algorithm computes the shortest path in the corresponding exchange graph. This operation can be performed using algorithm $\mathtt{Get-Distances}$ presented in \cite{Viswanathan_Zick2023b} which has been proven to run in $O(m T_u \log m)$. If such a path has been found, a path augmentation is performed. More precisely, each item in the path is reallocated from an agent to the one dictated by the transfer path, and their ancestors must update their bundle accordingly. Hence, in the worst case, all the items are involved in the path augmentation, and therefore it is in $O(nm)$. If no transfer path is found, then the algorithm only deletes at most $n$ nodes which can be performed in $O(n)$. Hence, each iteration step runs in $O(m T_u \log m + mn + nb)$.  
\end{proof}

Note that in some cases, \textbf{SMA} may incur much higher computational complexity than \textbf{MGYS}. This is because computing $\hat{v}_i(S)$ for an internal node $i \in \internal$ and a set of $S \subseteq \items$ requires solving a matroid partition problem, which remains relatively expensive. Using the algorithm of \cite{Terao2025}, this takes $O((m_i \log m_i + n_i) \cdot \sqrt{m_i} \cdot T_u)$, where $m_i = |S|$ and $n_i = |\leaves(i)|$. As an illustration, consider an instance where $m > n$, and the root has two children, and apply {\bf SMA} using {\bf GYS} to compute each $A^i$ during the recursive calls. The first call at the root then requires running {\bf GYS} on two agents, where each utility evaluation involves solving a matroid partition problem. This leads to a time complexity in the order of $m^{7/2} (\log m)^2 \cdot T_u + n \sqrt{m}  \log m \cdot T_u + m b$ just for the initial call, which already exceeds the overall complexity of \textbf{MGYS}. This will be illustrated in the experimental section.

However, we also note that \textbf{MGYS} is slightly less general than \textbf{SMA}, as it requires fairness notions for which a coherent gain function can be defined. For instance, defining such a function for EF1 would be challenging, as already observed in \cite{Viswanathan_Zick2023b}.

\section{Experimental comparisons of the Algorithms} \label{section: simulations}

In this section, we compare {\bf SMA} and {\bf MGYS} computation times and evaluate {\bf MGYS} fairness by measuring the fraction of instances where it fails to return a multilevel $\Psi$-maximizing allocation ($err_1$). For these instances, we measure the error as the sum, over all agents, of the absolute differences between their bundle size and that in a locally $\Psi_i$-maximizing allocation ($err_2$). As a baseline, we compare the fairness performance of \textbf{MGYS} to that of \textbf{GYS} applied to the leaves, updating the bundles of the internal nodes accordingly.

\paragraph{Implementation details.} Experiments were conducted on a server equipped with two Intel Xeon X5690 CPUs running at 3.47GHz, and 144 GB of RAM (each experiment was run on 4 cores). The program is written in Python. All results were obtained over 200 instances, with a timeout set to 20 minutes. For the construction of exchange graphs and the computation of shortest paths, we use the NetworkX library. 

\paragraph{Instance generation.} We perform experiments over two types of tree structures: a balanced tree with branching factor 3 and a comb tree (i.e. a path-like tree in which every internal node has exactly one extra child). We randomly generate the leaves MRFs among the following classes: binary additive, binary additive with cardinality constraints, cardinal valuation, and binary assignment. 
Preferences over singletons are drawn independently, with sparsity introduced by sampling item preferences from a Bernoulli distribution with parameter $p$.  Agent weights are sampled uniformly at random from the integer interval $[[1, 5]]$. 
To facilitate reproducibility, we set the random seed to 1 at the beginning of all experiments\footnote{Numerical tests were also run on different seeds, just to make sure our conclusions are robust.}. 

\paragraph{Algorithms.} Regarding the algorithms, {\bf SMA} was run using {\bf GYS} as the local algorithm at each internal node, as it is the fastest available option. Recall that, as a baseline, we considered {\bf GYS} applied to the leaves. In these experiments, we focused on instances where all nodes seek a Lorenz-dominant allocation, so as to be fair to \textbf{GYS}, which considers only one fairness notion at the leaves. Note that using Lorenz dominance also avoids the issue of internal node weights, which is not handled by {\bf GYS} when applied only to the leaves.

\paragraph{Running time.} In the following,  we report both the mean and the standard deviation. We present in Table  \ref{tab: running time} a comparison of the running time of the two proposed algorithms: {\bf SMA} and {\bf MGYS}. These results cover instances varying the number of agents and items independently, jointly, as well as the impact of the hierarchical structure. As we did not notice any impact of the sparsity of preferences on computation times, we only report results for $p = 0.5$.

\begin{table}[h!]
    \small
    \centering
    \caption{Computation time for $p = 0.5$, varying the numbers of agents and items (‘–’ indicates a timeout).}
    \begin{tabular}{|c|c|c|c|c|c|c|c|}
      \hline
      \multirow{3}{*}{n} & \multirow{3}{*}{m} & \multicolumn{2}{c|}{\textbf{SMA}} & \multicolumn{2}{c|}{\textbf{MGYS}} \\
      \cline{3-6}
      & & Balanced Tree & Comb Tree & Balanced Tree & Comb Tree \\
      \hline
      $12$ & $20$  & $3.594 \pm 3.156$ & $4.850 \pm 6.662$ & $0.136 \pm 0.109$ & $0.127 \pm 0.133$ \\
      $12$ & $30$  & $19.553 \pm 20.226$ & $26.854 \pm 37.007$ & $0.415 \pm 0.386$ & $0.464 \pm 0.558$ \\
      $12$ & $40$  & $85.382 \pm 80.399$ & $132.882 \pm 208.716$ & $1.022 \pm 1.002$ & $1.154 \pm 1.351$ \\
      $12$ & $50$  & $226.963 \pm 260.205$ & $377.806 \pm 608.885$ & $2.323 \pm 2.275$ & $2.120 \pm 2.483$ \\
      $12$ & $100$  & - & - & $18.040 \pm 19.425$ & $16.192 \pm 21.620$ \\
      $12$ & $200$  & - & - & $124.808 \pm 124.120$ & $105.755 \pm 129.522$ \\
      $12$ & $300$  & - & - & $404.667 \pm 347.273$ & $319.567 \pm 309.994$ \\
      $12$ & $400$  & - & - & $803.072 \pm 773.169$ & $738.174 \pm 738.054$ \\
      $12$ & $500$  & - & - & - & - \\
      \hline
      $10$ & $20$  & $2.756 \pm 2.878$ & $4.854 \pm 6.212$ & $0.134 \pm 0.121$ & $0.127 \pm 0.133$ \\
      $20$ & $20$  & $3.734 \pm 3.045$ & $4.339 \pm 6.049$ & $0.164 \pm 0.105$ & $0.193 \pm 0.186$ \\
      $30$ & $20$  & $4.184 \pm 3.124$ & $4.537 \pm 5.858$ & $0.215 \pm 0.138$ & $0.213 \pm 0.217$ \\
      \hline
      $20$ & $40$  & $85.834 \pm 73.059$ & $116.489 \pm 167.763$ & $1.137 \pm 0.810$ & $1.240 \pm 1.530$ \\
      $30$ & $60$  & $525.322 \pm 400.514$ & $736.185 \pm 1196.904$ & $3.632 \pm 2.097$ & $4.614 \pm 5.592$ \\
      $40$ & $80$  & - & - & $9.048 \pm 4.412$ & $9.797 \pm 12.093$ \\
      $50$ & $100$  & - & - & $19.674 \pm 8.978$ & $26.006 \pm 28.499$ \\
      $60$ & $120$  & - & - & $34.945 \pm 17.049$ & $38.108 \pm 46.705$ \\
      $70$ & $140$  & - & - & $56.286 \pm 26.013$ & $57.834 \pm 69.690$ \\
      $100$ & $200$  & - & - & $167.116 \pm 87.478$ & $151.913 \pm 166.255$ \\
      $150$ & $300$  & - & - & $553.209 \pm 187.716$ & $533.791 \pm 494.589$ \\
      $200$ & $400$  & - & - & - & $1151.340 \pm 996.705$ \\
      \hline
    \end{tabular}
    \label{tab: running time}
\end{table}

Results show that \textbf{MGYS} is significantly faster than \textbf{SMA}. For instance, on balanced trees with $n=12, m=20$, the average running time of \textbf{SMA} is 3.594 seconds, whereas \textbf{MGYS} runs in 0.136 seconds, making it about 26 times faster than \textbf{SMA}. This performance gap widens as the instance size increases: for $n=12, m=50$, \textbf{MGYS} is nearly a hundred times faster than \textbf{SMA}.  Hence, we also observe that \textbf{MGYS} is highly scalable, successfully solving instances with up to $n = 200$ agents and $m = 400$ items before hitting the time limit. Moreover, we see that \textbf{MGYS} is robust to the hierarchical structure, whereas \textbf{SMA} performs worse on comb trees than on balanced trees.

\paragraph{Fairness performance} We evaluated the empirical fairness of \textbf{MGYS} by measuring the fraction of instances for which it fails to return a multilevel $\Psi$-maximizing allocation w.r.t. $v$. To do this, for each allocation $\pi^* \in \mallocations$ computed by \textbf{MGYS}, we ran the monolevel algorithm \textbf{GYS} at each internal node $i \in \internal$ on the set of items $\pi^*(i)$, and compared its output to that of \textbf{MGYS}. We show that \textbf{MGYS} rarely fails to produce a multilevel $\Psi$-maximizing allocation w.r.t. $v$ ($err1$), and even when it does, the allocation returned is still close to fair (as $err2$ is also very low). We present the complete results in the following. For $err1$, we report a 95\% confidence interval, while for $err2$, we report the mean and standard deviation.

As already said, \textbf{MGYS} rarely fails to construct a multilevel $\Psi$-maximizing allocation w.r.t. $v$, and even under very restrictive preferences ($p = 0.1$), it fails in only $10\%$ to $22\%$ of instances (at the $95\%$ confidence level). We ran the same tests using correlated preferences, generated by first assigning singleton preferences to a reference agent (via a Bernoulli distribution with parameter $p$), then copying these preferences to each other agent with probability $\rho = 0.8$. In Table \ref{tab:preferences_fairness}, we observe that it only has a small impact on the performance of \textbf{MGYS}, confirming the good fairness properties of this heuristic algorithm. We also observe a slight degradation in fairness on comb trees compared to balanced trees (see Tables \ref{tab:preferences_fairness} and \ref{tab:m_size_fairness}). Moreover, fairness performance appears unaffected by instance size, as shown in Table~\ref{tab:m_size_fairness}.

\begin{table}[h!]
\small
\centering
\begin{minipage}{0.48\textwidth}
  \centering
  \caption{Values of error $err_1$ for $n=15$ and $m=25$.}
  \begin{tabular}{|c|c|c|c|}
    \hline
    Preferences & $p$ & Balanced Tree & Comb Tree \\
    \hline
    $Indep$ & $0.1$ & $[0.10, 0.22]$ & $[0.11, 0.19]$ \\
    $Indep$ & $0.5$ & $[0.00, 0.00]$ & $[0.00, 0.04]$ \\
    $Indep$ & $0.9$ & $[0.00, 0.00]$ & $[0.00, 0.00]$ \\
    \hline
    $Corr$ & $0.1$ & $[0.06, 0.14]$ & $[0.07, 0.15]$ \\
    $Corr$ & $0.5$ & $[0.00, 0.04]$ & $[0.11, 0.19]$ \\
    $Corr$ & $0.9$ & $[0.00, 0.00]$ & $[0.01, 0.05]$ \\
    \hline
  \end{tabular}
  \label{tab:preferences_fairness}
\end{minipage}
\hfill
\begin{minipage}{0.48\textwidth}
  \centering
  \caption{Values of error $err_1$ for $p=0.5$.}
  \begin{tabular}{|c|c|c|c|}
    \hline
    $n$ & $m$ & Balanced Tree & Comb Tree \\
    \hline
    $12$ & $20$ & $[0.00, 0.00]$ & $[0.01, 0.05]$ \\
    $12$ & $30$ & $[0.00, 0.02]$ & $[0.01, 0.06]$ \\
    $12$ & $40$ & $[0.00, 0.00]$ & $[0.00, 0.04]$ \\
    \hline
    $10$ & $20$ & $[0.00, 0.04]$ & $[0.01, 0.09]$ \\
    $20$ & $20$ & $[0.00, 0.00]$ & $[0.02, 0.06]$ \\
    $30$ & $20$ & $[0.00, 0.00]$ & $[0.01, 0.05]$ \\
    \hline
  \end{tabular}
  \label{tab:m_size_fairness}
\end{minipage}
\end{table}

We also report results on the distance to a multilevel $\Psi$-maximizing allocation ($err_2$) when \textbf{MGYS} fails to produce one (see Tables \ref{tab:preferences_fairness_distance} and \ref{tab:m_size_fairness_distance}). Values of $err_2$ remain consistently low, indicating that even when \textbf{MGYS} fails to produce a multilevel $\Psi$-maximizing allocation w.r.t. $v$, it still remains close in terms of fairness. For instance, for independent preferences with sparsity parameter $p=0.1$, we observe that  $err2$ is 2.19. Since $err2$ counts each misallocated item twice (once for the node that incorrectly receives it and once for the node that should have received it), this corresponds to approximately one misallocated item on average. As before, we find no significant impact of instance size on this metric, and we observe that this error is slightly higher in comb trees than in balanced trees.

\begin{table}[h!]
\small
\centering
\begin{minipage}{0.48\textwidth}
  \centering
  \caption{Values of $err_2$ for $n=15$ and $m=25$.}
  \begin{tabular}{|c|c|c|c|}
    \hline
    Preferences & $p$ & Balanced Tree & Comb Tree \\
    \hline
    $Indep$ & $0.1$ & $2.19 \pm 0.59$ & $2.28 \pm 0.87$ \\
    $Indep$ & $0.5$ & $0.00$ & $3.00 \pm 1.00$ \\
    $Indep$ & $0.9$ & $0.00$ & $0.00$ \\
    \hline
    $Corr$ & $0.1$ & $2.53 \pm 0.88$ & $2.73 \pm 1.14$ \\
    $Corr$ & $0.5$ & $3.50 \pm 0.87$ & $3.86 \pm 2.10$ \\
    $Corr$ & $0.9$ & $0.00$ & $3.33 \pm 1.49$ \\
    \hline
  \end{tabular}
  \label{tab:preferences_fairness_distance}
\end{minipage}
\hfill
\begin{minipage}{0.48\textwidth}
  \centering
  \caption{ Values of $err_2$ for $p=0.5$.}
  \begin{tabular}{|c|c|c|c|}
    \hline
    $n$ & $m$ & Balanced Tree & Comb Tree \\
    \hline
    $12$ & $20$ & $0.00$ & $2.00 \pm 0.00$ \\
    $12$ & $30$ & $2.00 \pm 0.00$ & $2.25 \pm 0.66$ \\
    $12$ & $40$ & $0.00$ & $3.5 \pm 1.66$ \\
    \hline
    $10$ & $20$ & $2.00 \pm 0.00$ & $2.20 \pm 0.60$ \\
    $20$ & $20$ & $0.00$ & $2.86 \pm 0.99$ \\
    $30$ & $20$ & $0.00$ & $3.00 \pm 1.00$ \\
    \hline
  \end{tabular}
  \label{tab:m_size_fairness_distance}
\end{minipage}
\end{table}

Though \textbf{MGYS} performs very well in terms of fairness, it is natural to ask how existing monolevel algorithms behave in the same setting. To investigate this, we apply \textbf{GYS} to the leaves of the tree and update the bundles of the internal nodes accordingly. We assume that all internal nodes use Lorenz-dominance as their fairness criterion, and therefore impose Lorenz-dominance between the leaves as well when running \textbf{GYS}. We summarize the results in Table~\ref{tab:preferences_fairness_GYS} and Table~\ref{tab:m_size_fairness_GYS}. We observe that \textbf{MGYS} exhibits similar behavior to that observed in the general setting described above. In contrast, \textbf{GYS} on the leaves performs very poorly in terms of fairness. In particular, it almost never yields a multilevel $\Psi$-maximizing allocation with respect to $v$ (since $err1$ is close to $1$), and for comb trees, the resulting allocations can be very far from multilevel fair ones (as indicated by high $err2$ values). These results emphasize that computing a multilevel fair allocation is fundamentally different from computing a fair allocation only at the leaves, which highlight the need for proper multilevel algorithms. They also highlight that the strong fairness properties achieved by \textbf{MGYS} are highly non-trivial.

\begin{table}[h!]
    \centering
    \caption{Values of $err_1$ and $err_2$  for $n=15$ and $m=25$ for \textbf{MGYS} and \textbf{GYS}.}
    \begin{tabular}{|c|c|c||cc||cc|}
        \hline
        \multirow{2}{*}{Pref.} & \multirow{2}{*}{$p$} & \multirow{2}{*}{Algo.} & \multicolumn{2}{c||}{Balanced tree} & \multicolumn{2}{c|}{Comb tree} \\
        \cline{4-7}
        & & & $err_1$ & $err_2$ & $err_1$ & $err_2$ \\
        \hline \hline
        \multirow{2}{*}{$Indep$} & \multirow{2}{*}{$0.1$} 
            & \textbf{MGYS} & $[0.13, 0.25]$ & $2.47 \pm 0.97$ & $[0.12, 0.24]$ & $2.34 \pm 0.89$ \\
        & & \textbf{GYS}  & $[0.75, 0.87]$ & $3.43 \pm 1.58$ & $[0.78, 0.90]$ & $11.31 \pm 8.07$ \\
        \hline
        \multirow{2}{*}{$Indep$} & \multirow{2}{*}{$0.5$} 
            & \textbf{MGYS} & $[0.00, 0.00]$ & $0.00$ & $[0.04, 0.12]$ & $2.80 \pm 1.42$ \\
        & & \textbf{GYS}  & $[0.95, 0.99]$ & $3.19 \pm 1.62$ & $[0.98, 1.00]$ & $26.66 \pm 10.41$ \\
        \hline
        \multirow{2}{*}{$Indep$} & \multirow{2}{*}{$0.9$} 
            & \textbf{MGYS} & $[0.00, 0.00]$ & $0.00$ & $[0.00, 0.00]$ & $0.00$ \\
        & & \textbf{GYS}  & $[0.95, 0.99]$ & $3.04 \pm 1.46$ & $[0.95, 0.99]$ & $27.51 \pm 10.53$ \\
        \hline \hline
        \multirow{2}{*}{$Corr$} & \multirow{2}{*}{$0.1$} 
            & \textbf{MGYS} & $[0.05, 0.13]$ & $2.22 \pm 0.63$ & $[0.09, 0.17]$ & $2.72 \pm 1.25$ \\
        & & \textbf{GYS}  & $[0.91, 0.99]$ & $3.12 \pm 1.49$ & $[0.95, 0.99]$ & $16.92 \pm 9.73$ \\
        \hline
        \multirow{2}{*}{$Corr$} & \multirow{2}{*}{$0.5$} 
            & \textbf{MGYS} & $[0.00, 0.04]$ & $3.00 \pm 1.00$ & $[0.10, 0.22]$ & $3.88 \pm 1.93$ \\
        & & \textbf{GYS}  & $[0.95, 0.99]$ & $3.35 \pm 1.70$ & $[0.97, 1.00]$ & $25.71 \pm 9.45$ \\
        \hline
        \multirow{2}{*}{$Corr$} & \multirow{2}{*}{$0.9$} 
            & \textbf{MGYS} & $[0.00, 0.00]$ & $0.00$ & $[0.00, 0.02]$ & $4.00 \pm 0.00$ \\
        & & \textbf{GYS}  & $[0.95, 0.99]$ & $3.26 \pm 1.62$ & $[0.97, 1.00]$ & $27.56 \pm 10.79$ \\
        \hline
    \end{tabular}
    \label{tab:preferences_fairness_GYS}
\end{table}

\begin{table}[h!]
    \centering
    \caption{Values of $err_1$ and $err_2$ with $p=0.5$ for \textbf{MGYS} and \textbf{GYS}.}
    \begin{tabular}{|c|c|c||cc||cc|}
        \hline
        \multirow{2}{*}{$n$} & \multirow{2}{*}{$m$} & \multirow{2}{*}{Algo.} & \multicolumn{2}{c||}{Balanced tree} & \multicolumn{2}{c|}{Comb tree} \\
        \cline{4-7}
        & & & $err_1$ & $err_2$ & $err_1$ & $err_2$ \\
        \hline \hline
        \multirow{2}{*}{$12$} & \multirow{2}{*}{$20$} 
            & \textbf{MGYS} & $[0.00, 0.00]$ & $0.00$ & $[0.01, 0.05]$ & $2.00 \pm 0.00$ \\
        & & \textbf{GYS}  & $[0.78, 0.90]$ & $5.94 \pm 1.78$ & $[0.96, 1.00]$ & $17.16 \pm 7.91$ \\
        \hline
        \multirow{2}{*}{$12$} & \multirow{2}{*}{$30$} 
            & \textbf{MGYS} & $[0.00, 0.00]$ & $0.00$ & $[0.01, 0.05]$ & $2.67 \pm 0.94$ \\
        & & \textbf{GYS}  & $[0.91, 0.99]$ & $8.45 \pm 3.01$ & $[0.95, 0.99]$ & $24.13 \pm 10.81$ \\
        \hline
        \multirow{2}{*}{$12$} & \multirow{2}{*}{$40$} 
            & \textbf{MGYS} & $[0.00, 0.00]$ & $0.00$ & $[0.01, 0.05]$ & $2.33 \pm 0.75$ \\
        & & \textbf{GYS}  & $[0.90, 0.98]$ & $11.34 \pm 4.71$ & $[0.91, 0.99]$ & $31.98 \pm 15.15$ \\
        \hline \hline
        \multirow{2}{*}{$10$} & \multirow{2}{*}{$20$} 
            & \textbf{MGYS} & $[0.00, 0.00]$ & $0.00$ & $[0.00, 0.04]$ & $2.00 \pm 0.00$ \\
        & & \textbf{GYS}  & $[0.87, 0.95]$ & $4.66 \pm 1.67$ & $[0.90, 0.98]$ & $10.84 \pm 4.74$ \\
        \hline
        \multirow{2}{*}{$20$} & \multirow{2}{*}{$20$} 
            & \textbf{MGYS} & $[0.00, 0.00]$ & $0.00$ & $[0.00, 0.02]$ & $2.00 \pm 0.00$ \\
        & & \textbf{GYS}  & $[0.87, 0.95]$ & $3.54 \pm 2.25$ & $[1.00, 1.00]$ & $41.84 \pm 11.18$ \\
        \hline
        \multirow{2}{*}{$30$} & \multirow{2}{*}{$20$} 
            & \textbf{MGYS} & $[0.00, 0.00]$ & $0.00$ & $[0.00, 0.04]$ & $2.50 \pm 0.87$ \\
        & & \textbf{GYS}  & $[1.00, 1.00]$ & $7.47 \pm 0.99$ & $[1.00, 1.00]$ & $65.93 \pm 11.43$ \\
        \hline
    \end{tabular}
    \label{tab:m_size_fairness_GYS}
\end{table}

\paragraph{On the number of discarded items.} As explained in the paper, our algorithms may leave some items unallocated at the root. In Table~\ref{tab: discarded items}, we show that the number of discarded items is extremely low, and actually very often null. Note that the scores of \textbf{SMA} and \textbf{MGYS} are identical, which is expected since both algorithms are multilevel utilitarian optimal w.r.t. $v$ and we use \textbf{GYS} at each internal node in \textbf{SMA}.

\begin{table}[h!] \small
\centering
\caption{Average number of discarded items for $p=0.5$ for \textbf{SMA} and \textbf{MGYS}.}
\begin{tabular}{|c|c|c|c|c|c|c|c|}
  \hline
  $n$ & $m$ & Balanced Tree & Comb Tree  \\
  \hline
  $12$ & $20$  & $0.01 \pm 0.14$ & $0.00$ \\
  $12$ & $30$  & $0.00$ & $0.16 \pm 1.05$ \\
  $12$ & $40$  & $0.05 \pm 0.42$ & $0.12 \pm 0.74$ \\
  \hline
  $10$ & $20$  & $0.04 \pm 0.31$ & $0.12 \pm 0.52$ \\
  $20$ & $20$  & $0.00$ & $0.00$ \\
  $30$ & $20$  & $0.00$ & $0.00$ \\
  \hline
\end{tabular}
\label{tab: discarded items}
\end{table}

\section{Future Work}

In this paper, we introduced the multilevel fair allocation problem, extending fair allocation to settings with hierarchical structure and thereby considerably broadening its range of applications. We presented two algorithms. The first one, \textbf{SMA}, follows an intuitive top-down approach and satisfies all desired efficiency and fairness properties, but remains relatively expensive to run. This motivated us to propose a multilevel extension of the General Yankee Swap \cite{Viswanathan_Zick2023b}. Although this algorithm loses formal fairness guarantees, it remains fair experimentally and is significantly more efficient than \textbf{SMA}, making it suitable for real-world applications. Finally, we showed that both algorithms could be extended to handle chores.

Numerous questions remain open. One concerns our modeling assumption for the hierarchical structure. We chose to represent the hierarchy as a tree, which is arguably the most natural approach. However, some applications may require more general structures. For instance, one might instead use a directed acyclic graph, as in situations where a laboratory belongs to several departments. Such an extension would have significant implications for the algorithmic design. In particular, a top-down procedure such as the one used in \textbf{SMA} would require substantial adaptations.

Another limitation of our model lies in the assumption that the utilities of internal nodes depend solely on the utilities of their children. In some situations, this assumption may be too restrictive. For example, while a laboratory may indeed derive its utility only from the utilities of its research groups, one could also argue that a laboratory may have its own objectives, such as expanding its office space independently of how the research groups use it. It would therefore be interesting to incorporate a personal component into the utilities of internal nodes. However, such an extension would significantly alter the model; in particular, the estimated utilities of internal nodes might no longer satisfy the MRF property.

Finally, another natural question is whether efficient algorithms can be designed for other fairness notions or under different assumptions on the leaves’ utilities. Indeed, \textbf{SMA} crucially relies on the existence of monolevel algorithms that jointly guarantee utilitarian optimality and fairness. However, such algorithms are not known for all popular fairness notions. For instance, while such an algorithm exists for EF1 \cite{Benabbou_et_al2021}, none is known for WEF1 to the best of our knowledge. As a consequence, \textbf{SMA} cannot compute a multilevel allocation that is both utilitarian-optimal and WEF1 w.r.t. $v$. Similarly, if the leaves’ utilities are no longer MRF, \textbf{SMA} is unlikely to remain applicable. Indeed, MRF utilities have a particularly convenient structure that enables algorithms with strong guarantees, whereas more classical utility classes generally do not allow such properties.

\bibliographystyle{plain}
\bibliography{acmart}

\appendix

\section{Glossary of notations} \label{appendix: glossary notations}

In Table~\ref{tab:glossary}, we propose a glossary of notations to help keep track of the different objects defined in the paper. 

\begin{table*}[h]
    \centering
    \begin{tabular}{ll}
    \toprule
    \textbf{Description} & \textbf{Notation} \\
    \midrule
    Set of items & $\items$ \\
    Set of agents & $\nodes$ \\
    Weight of agent $i \in \nodes$ & $w_i$ \\
    Multilevel tree rooted in $i$ & $\tree_i = (\nodes_i, \edges_i)$ \\
    Internal nodes of $\tree_i$ & $\internal(i)$ \\
    Leaves of internal node $i \in \internal(i)$ & $\leaves(i)$ \\
    Children of node $i \in \nodes$ & $\children(i)$ \\
    Ancestors of node $i \in \nodes$ & $\ancestors(i)$ \\
    Set of local allocations of $S \subseteq \items$ to $N \subseteq \nodes$ & $\allocations^S_N$ \\
    Set of multilevel allocations & $\mallocations$ \\
    Utility function of node $i \in \nodes$ & $v_i : \mallocations \rightarrow \mathbb{R}_{\geq 0}$ \\
    Utility function of leaf $x \in \leaves$ & $u_x : 2^\items \rightarrow \mathbb{R}_{\geq 0}$ \\
    Estimated utility of node $i \in \internal$ for SMA & $\hat{v}_i : 2^\items \rightarrow \mathbb{R}_{\geq 0}$ \\
    Fairness notion associated with node $i \in \internal(i)$ & $\Psi_i$ \\
    \bottomrule
    \end{tabular}
    \caption{Glossary of notation.}
    \label{tab:glossary}
\end{table*}

\section{The considered fairness notions w.r.t. $\hat{v}$} \label{appendix: fairness v-hat}

We define how the notions based on $v$, presented in Section~\ref{section: model}, can be adapted to $\hat{v}$. 

\begin{definition}[Lorenz-dominance w.r.t. $\hat{v}$] Given a bundle of items $S \subseteq \items$ and an internal node $i \in \internal$ whose children are $\{i_1, \ldots, i_p\}$, allocation $A \in \allocations^S_{\children(i)}$ is \emph{Lorenz-dominating w.r.t. $\hat{v}$} if for all $A' \in \allocations^S_{\children(i)}$, we have:
$$
\forall k \in \{1,\ldots, p\}, \sum_{t=1}^k \vv{\hat{s}(A)}_t \geq \sum_{t=1}^k \vv{\hat{s}(A')}_t
$$
where $\vv{\hat{s}(\cdot)}$ is the vector $\vv{\hat{v}_i(\cdot)}$ sorted in increasing order, and $\vv{\hat{s}(\cdot)}_t$ is the $t^{\text{th}}$ component of $\vv{\hat{s}(\cdot)}$ for any $t \in \{1, \ldots, p\}$ 
\end{definition}

\begin{definition}[weighted leximin w.r.t. $\hat{v}$] Given a bundle $S \subseteq \items$ and an internal node $i \in \internal$, allocation $A \in \allocations^S_{\children(i)}$ is \emph{weighted leximin w.r.t. $\hat{v}$} if 
$$
\nexists A' \in \allocations^S_{\children(i)}, \vv{\hat{e}(A')} \succeq_{lex} \vv{\hat{e}(A)}
$$
where $\vv{\hat{e}(\cdot)}$ is the vector $(\frac{\vv{\hat{v}_i(\cdot)}_1}{w_{i_1}}, \ldots, \frac{\vv{\hat{v}_i(\cdot)}_p}{w_{i_p}})$ sorted in increasing order.
\end{definition}

\begin{definition}[maximum weighted Nash social welfare w.r.t. $\hat{v}$] Given a bundle of items $S \subseteq \items$ and an internal node $i \in \internal$, $A \in \allocations^S_{\children(i)}$ \emph{maximizes weighted Nash social welfare w.r.t. $\hat{v}$} if it minimizes the number of children with zero utility, and subject to that satisfies the following property:
$$
\forall A' \in \allocations^S_{\children(i)},
\prod_{j \in \children(i)} \hat{v}_{j}(A(j))^{w_{j}} \geq \prod_{j \in \children(i)} \hat{v}_{j}(A'(j))^{w_{j}}
$$
\end{definition}

\begin{definition}[maximum weighted p-means welfare w.r.t. $\hat{v}$] Given a bundle of items $S \subseteq \items$ and an internal node $i \in \internal$, $A \in \allocations^S_{\children(i)}$ \emph{maximizes weighted $p$-means welfare w.r.t. $\hat{v}$} for some $p \leq 1$ if it minimizes the number of children with zero utility, and subject to that satisfies the following property:
$$
\forall A' \in \allocations^S_{\children(i)},
(\sum_{j \in \children(i)} w_{j} \hat{v}_{j}(A(j))^p)^{\frac{1}{p}} \geq (\sum_{j  \in \children} w_{j} \hat{v}_{j}(A'(j))^p )^{\frac{1}{p}}
$$

\end{definition}

\section{Missing proof from Section~\ref{section: sma}} \label{appendix: missing proofs section 3}

We now prove Proposition~\ref{proposition:v hat  = and sup v}. Let $\pi^*$ be the multilevel allocation constructed by {\bf SMA}. 

\vhatsupv*

\begin{proof}
Recall that $v_i(\pi) = \sum_{j \in \children(i)} v_j(\pi) = \sum_{x \in \leaves(i)}  u_x(\pi(x))$ for any allocation $\pi \in \mallocations$. From $\pi|_{\leaves(i)} \in \allocations^{\pi(i)}_{\leaves(i)}$ and $\hat{v}_i(\pi(i)) = \max_{A \in \allocations^{\pi(i)}_{\leaves(i)}} \sum_{x \in \leaves(i)} u_x(A(x))$, we can derive $\hat{v}_i(\pi(i)) \geq v_i(\pi)$ which establishes the second part of the proposition.

For any node $i \in \nodes$, let $h(i)$ be the height of $i$, i.e. the number of arrows in a longest path starting at node $i$. Let us prove by induction that property  $p(h)$ : "$\hat{v}_i(\pi^*(i)) = v_i(\pi^*)$ for all $i\in \internal$ s.t. $h(i)=h$" holds for all $h \in \{1,\ldots, h(1)\}$. We begin by proving the base case : let $i \in \internal$ be an internal node such that $h(i)=1$. Note that $\children(i) = \leaves(i)$ by definition and that $\pi^*\vert_{\children(i)} \in \arg \max_{A \in \allocations^{\pi^*(i)}_{\children(i)}} \sum_{j \in \children(i)} \hat{v}_j(A(j))$ since $\pi^*\vert_{\children(i)}$ is utilitarian optimal w.r.t. $\hat{v}$ in $\mathcal{A}^{\pi^*(i)}_{\children(i)}$ (see lines 4 and 5 in Algorithm 1). Hence $\pi^*\vert_{\leaves(i)} \in \arg \max_{A \in \allocations^{\pi^*(i)}_{\leaves(i)}} \sum_{x \in \leaves(i)} u_x(A(x))$ and so
\begin{align*}
    & \hat{v}_i(\pi^*(i)) = \max_{A \in \allocations^{\pi^*(i)}_{\leaves(i)}}\sum_{x \in \leaves(i)} u_x(A(x)) \mbox{ by definition of }\hat{v}_i \\
    & = \sum_{x \in \leaves(i)} u_x(\pi^*(x)) \mbox{ since } \pi^*\vert_{\leaves(i)} \in \arg \max_{A \in \allocations^{\pi^*(i)}_{\leaves(i)}} \sum_{x \in \leaves(i)} u_x(A(x))\\
    & = \sum_{x \in \leaves(i)} v_x(\pi^*) \mbox{ by definition of } v_x\\
    & = v_i(\pi^*) \mbox{ by definition of } v_i
\end{align*}
Thus we have $\hat{v}_i(\pi^*(i)) =  v_i(\pi^*)$ for any internal nodes $i$ with $h(i)=1$, which
shows that $p(1)$ is true. To prove the inductive step, we first need to prove the following lemma:

\begin{lemma} \label{lemma: usw children = usw leaves}
For any internal node $i \in \internal$, we have
 $$\max_{A \in \allocations^{\pi^*(i)}_{\children(i)}} \sum_{j \in \children(i)} \hat{v}_j(A(j)) = \max_{A \in \allocations^{\pi^*(i)}_{\leaves(i)}} \sum_{x \in \leaves(i)} u_x(A(x))$$
\end{lemma}

\begin{proof}
In this proof, we consider two types of local allocations: allocations to leaves in $\tree_i$ and  allocations to the children of $i$. 
\begin{itemize}
    \item For any local allocation  $A \in \allocations^{\pi^*(i)}_{\leaves(i)}$ to the leaves in $\leaves(i)$, we denote by $W_{\leaves(i)}(A)$ its utilitarian social welfare w.r.t. $u$, i.e. $W_{\leaves(i)}(A) = \sum_{x \in \leaves(i)} u_x(A(x))$. Let $A^* \in \allocations^{\pi^*(i)}_{\leaves(i)}$ denote a local allocation to the leaves in $\leaves(i)$ with maximum welfare, i.e. $W_{\leaves(i)}(A^*) = \max_{A \in \allocations^{\pi^*(i)}_{\leaves(i)}} \sum_{x \in \leaves(i)} u_x(A(x))$.
    \item For any local allocation $B \in \allocations^{\pi^*(i)}_{\children(i)}$ to the children in $\children(i)$, we denote by $\hat{W}_{\children(i)}(B)$ its utilitarian social welfare w.r.t. $\hat{v}$, defined by $\hat{W}_{\children(i)}(B) = \sum_{j \in \children(i)} \hat{v}_j(B(j))$. Let $B^* \in \allocations^{\pi^*(i)}_{\children(i)}$ be a local allocation to the children in $\children(i)$ with maximum welfare, i.e. $\hat{W}_{\children(i)}(B^*) = \max_{B \in \allocations^{\pi^*(i)}_{\children(i)}} \sum_{j \in \children(i)} \hat{v}_j(B(j))$.
\end{itemize}
Our aim is to prove that $W_{\leaves(i)}(A^*) = \hat{W}_{\children(i)}(B^*)$. First, let us show that $W_{\leaves(i)}(A^*) \le \hat{W}_{\children(i)}(B^*)$. Let $D \in \allocations^{\pi^*(i)}_{\children(i)}$ be the local allocation to the children defined by $D(j) = \cup_{x \in \leaves(j)} A^*(x)$ for all $j \in \children(i)$. 
With a slight abuse of notation, we write $A^*\vert_{\leaves(j)}$ to denote the allocation to leaves in $\leaves(j)$ such that $A^*\vert_{\leaves(j)}(x) = A^*(x)$ for all $x \in \leaves(j)$. Note that $A^*\vert_{\leaves(j)}$ belongs to $ \allocations^{D(j)}_{\leaves(j)}$ for any $j \in \children(i)$ and since 
$\hat{v}_j(D(j)) = \max_{A \in \allocations^{D(j)}_{\leaves(j)}} \sum_{x \in \leaves(j)} u_x(A(x))$, then we necessarily have  $\hat{v}_j(D(j)) \geq \sum_{x \in \leaves(j)} u_x(A^*(x))$. By summing over all $j\in \children(i)$, we obtain $\hat{W}_{\children(i)}(D) = \sum_{j \in \children(i)} \hat{v}_j(D(j)) \geq \sum_{j \in \children(i)} \sum_{x \in \leaves(j)} u_x(A^*(x)) = W_{\leaves(i)}(A^*)$ by linearity. Since $B^*$ is an allocation to the children that maximizes $\hat{W}_{\children(i)}$ by definition, then we have $\hat{W}_{\children(i)}(B^*) \ge \hat{W}_{\children(i)}(D) \ge W_{\leaves(i)}(A^*)$.

Now let us prove that $W_{\leaves(i)}(A^*) \geq \hat{W}_{\children(i)}(B^*)$ holds. For any $j\in \children(i)$, let $C_j$ denote a local allocation to the leaves in $\leaves(j)$ such that $\hat{v}_j(B^*(j)) = \max_{A \in \allocations^{B^*(j)}_{\leaves(j)}} \sum_{x \in \leaves(j)} u_x(A(x)) = \sum_{x \in \leaves(j)} u_x(C_j(x))$. Let $C$ denote the local allocation to the leaves in $\leaves(i)$ composed of all local allocations $C_j$, $j\in \children(i)$. More precisely, for any $x\in \leaves(i)$, we have $C(x) = C_{j}(x)$ where $j \in \children(i)$ is such that $x \in \leaves(j)$. Then, we obtain $W_{\leaves(i)}(C) = \sum_{j \in \children(i)} \sum_{x \in \leaves(j)} u_x(C_j(x)) = \sum_{j \in \children(i)} \hat{v}_j(B^*(j))$ by definition of  $C_j$, $j\in \children(i)$. Since the latter sum is equal to $\hat{W}_{\children(i)}(B^*)$ by definition, then we obtain $W_{\leaves(i)}(C) = \hat{W}_{\children(i)}(B^*)$. Finally, since $A^*$ is a local allocation to the leaves in $\leaves(i)$ that maximizes $W_{\leaves(i)}$ by definition, then we obtain $W_{\leaves(i)}(A^*) \ge W_{\leaves(i)}(C)=\hat{W}_{\children(i)}(B^*)$ which concludes the proof.
\end{proof}

Now let us assume that $p(k)$ holds for all $k \in \{1,\ldots,h\}$ for a given $1 \le h < h(1)$, and let us prove that $p(h+1)$ holds. Let $i \in \internal$ be any internal node such that $h(i)=h+1$. Note that $\pi^*\vert_{\children(i)} \in \arg \max_{A \in \allocations^{\pi^*(i)}_{\children(i)}} \sum_{j \in \children(i)} \hat{v}_j(A(j))$ since $\pi^*\vert_{\children(i)}$ is utilitarian optimal w.r.t. $\hat{v}$ in $\allocations^{\pi^*(i)}_{\children(i)}$ (see lines 4 and 5 in Algorithm 1). Hence we have the following equalities: 
\begin{align*}
   \hat{v}_i(\pi^*(i))  &=\max_{A \in \allocations^{\pi^*(i)}_{\leaves(i)}} \sum_{x \in \leaves(i)} u_x(A(x)) \mbox{ by definition of }\hat{v}_i\\
     & = \max_{A \in \allocations^{\pi^*(i)}_{\children(i)}} \sum_{j \in \children(i)} \hat{v}(A(j)) \mbox{ by Lemma~\ref{lemma: usw children = usw leaves} }\\
    & = \sum_{j \in \children(i)} \hat{v}_j(\pi^*(j)) \mbox{ by construction of }\pi^*|_{\children(i)} \\
    & = \sum_{j \in \children(i)} v_j(\pi^*) \mbox{ by induction since } h(j) \le h\\
    & = v_i(\pi^*) \mbox{ by definition of }v_i
\end{align*}
Thus $p(h+1)$ is true which concludes the proof. 
\end{proof}

\section{On Envy-Freeness Up to One Good} \label{appendix: on EF1}

In the paper, we focus on a family of fairness notions that we grouped under the notation $\fair$. However, as mentioned at the end of Section~3, \textbf{SMA} can with other fairness notions, provided that (i) the notion admits a suitable adaptation to the multilevel setting and (ii) there exists a local monolevel algorithm ensuring this notion together with utilitarian optimality. One classical notion for which we can extend \textbf{SMA} is envy-freeness up to one good (EF1), using the algorithm of \cite{Benabbou_et_al2021} together with a multilevel adaptation of EF1. 
Envy-based notions indeed require careful adaptation to the multilevel setting: the utility functions of internal nodes cannot evaluate the bundle of a sibling, as they can only evaluate multilevel allocations. Here we propose an optimistic multilevel adaptation of EF1. It assumes that every internal node $i \in \internal$ is equipped with an estimated utility function that, given a set of items $S \subseteq \items$, computes a utilitarian-optimal allocation of $S$ to its leaves $\leaves(i)$ and returns the utilitarian welfare of this allocation. This optimistic estimated utility function coincides exactly with the estimated utility function $\hat{v}_i$ introduced in Section~3, and allows node $i$ to estimate the value of a bundle. We can then define multilevel envy-freeness up to one good as follows:

\begin{definition}[M{[opt]}-EF1 w.r.t. $v$]
    Given a multilevel allocation $\pi \in \Pi$ and an internal node $i \in \internal$, $\pi \vert_{\children(i)}$ is optimistic envy-free up to one good w.r.t. $v$ if for any $j, k \in \children(i)$, we have
    $$v_j(\pi) \geq \hat{v}_j(\pi(k) \setminus \{g\})$$ for some $g \in \pi(k)$. Multilevel allocation $\pi$ is then optimistic multilevel envy-free up to one good w.r.t. $v$ (denoted by M[opt]-EF1 w.r.t. $v$) if $\pi \vert_{\children(i)}$ is optimistic EF1 at every internal node $i \in \internal$.
\end{definition} 

Similarly to our other fairness notions, we then need to define optimistic EF1 w.r.t. $\hat{v}$. 

\begin{definition}[M{[opt]}-EF1 w.r.t. $\hat{v}$]
    Given a bundle $S \subseteq \items$ and an internal node $i \in \internal$, allocation $A \in \allocations^S_{\children(i)}$ is optimistic EF1 w.r.t. $\hat{v}$ if for any two children $j, k \in \children(i)$, we have 
    $$\hat{v}_j(A(j)) \geq \hat{v}_j(A(k) \setminus \{g\})$$ for some $g \in A(k)$. Multilevel allocation $\pi$ is then M[opt]-EF1 w.r.t. $\hat{v}$ if $\pi \vert_{\children(i)}$ is optimistic EF1 w.r.t. $\hat{v}$ at every internal node $i \in \internal$.
\end{definition}

The \textbf{SMA} algorithm consists of computing, from the root down to the leaves, at each node an allocation that is both utilitarian-optimal and M{[opt]}-EF1 w.r.t. $\hat{v}$ (using the algorithm in \cite{Benabbou_et_al2021}). We can then show it \textbf{SMA} returns a multilevel allocation that is both utilitarian-optimal and M[opt]-EF1 w.r.t. $v$ using Proposition~\ref{proposition:v hat  = and sup v}. 

\section{Gain functions for the fairness criteria} \label{appendix: gain functions}

Table~\ref{tab:summary gain functions} lists the gain functions associated with each fairness criterion, as established in \cite{Viswanathan_Zick2023b}. 

\begin{table*}[h!]
    \centering
    \caption{Summary of the gain functions.}
    \label{tab:summary gain functions}
    \begin{tabular}{|c|c|}
        \hline
        Fairness criterion $\Psi_i$ & Gain function $\phi_i$ \\
        \hline
        Lorenz dominance & $-v_i(\pi)$ \\
        \hline
        Weighed leximin & $\frac{-v_i(\pi)}{w_i}$; \\
        & Break ties using least index \\
        \hline
        Weighted Nash & $(1 + \frac{1}{v_i(\pi)})^{w_i}$ if $v_i(\pi) > 0$; \\
        & a large $M$ otherwise \\
        \hline
        Weighted $p$-Means & $\mathtt{sign}(p) \times w_i \times ((v_i(\pi)+1)^p - v_i(\pi)^p)$ if $v_i(\pi) > 0$ or $p > 0$; \\
        & $M w_i$ for a large $M$ otherwise \\
        \hline
    \end{tabular}
\end{table*}

The same authors also establish loss functions associated with some fairness notions. We give two of them for some internal node $i \in \internal$, one of its children $j \in \children(i)$, and a multilevel allocation $\pi \in \mallocations$: (1) for weighted leximin, $\theta_i(\pi, j) = (\frac{-v_j(\pi) + 1}{w_j}, -w_j)$, and (2) for min weighted p-means welfare for $p \geq 1$, $\theta_i(\pi, j) = w_i[(-v_j(\pi) + 1)^p - (-v_j(\pi))^p]$.

\section{Extending SMA to Chores} \label{appendix: chores}

In this section, we propose an extension of both algorithms to the chore setting.

\medskip 

\noindent \textbf{Utilities.} We now assume that the marginal utility of any item $g \in \items$ is either $-1$ or $0$. More precisely, the utility of each internal node is still defined as the utilitarian welfare of its children, while each leaf $x \in \leaves$ is endowed with a $\{-1,0\}$-SUB valuation function:

\begin{definition}[$\{-1, 0\}$-SUB function]
    A set function $u: 2^\items \rightarrow \mathbb{R}_{\leq 0}$ is a $\{-1, 0\}$-SUB function if it satisfies:
    \begin{enumerate}
        \item $u(\emptyset) = 0$,
        \item $\Delta^u(S, g) \in \{-1, 0\}$ for any $S \subseteq \items$ and any chore $g \in \items \setminus S$,
        \item $\Delta^u(S, g) \geq \Delta^u(T, g)$ for any $S \subseteq T \subseteq \items$ and any chore $g \in \items \setminus T$.
    \end{enumerate}
\end{definition}

\noindent \textbf{Fairness and loss functions.} As in \cite{Viswanathan_Zick2023b}, our algorithms for chores uses loss functions to guide the allocation towards a fair outcome. For any node $i\in \nodes \backslash\{1\}$, let $\theta_{\mathcal{P}(i)}(\pi, i)$ denote the {\em loss function} of agent $i$, capturing the marginal loss from receiving a chore under allocation $\pi$. Note that this function depends on the fairness notion of $\Psi_{\mathcal{P}(i)}$, and returns a $b$-dimensional vector computed from $v_i(\pi)$. Some loss functions for the aforementioned fairness criteria are provided in Appendix~\ref{appendix: gain functions}.

\medskip

\noindent \textbf{Algorithms.} Contrary to the setting with goods, where the goal was to compute a non-redundant multilevel utilitarian-optimal and $\Psi$-max allocation, the objective changes in the presence of chores: we aim to compute a multilevel allocation that is complete and $\Psi$-max w.r.t. $v$. We start by presenting an adaptation of \textbf{SMA} to chores, following the same scheme as the one used in \cite{Viswanathan_Zick2023b} to extend the General Yankee Swap to this setting.

First, we observe that, for any leaf $x \in \leaves$, a MRF valuation $\alpha_x: 2^\items \rightarrow \mathbb{R}_{\geq 0}$ can be constructed from its $\{-1,0\}$-SUB valuation as follows: 
$$\forall S \subseteq \items, \quad \alpha_x(S) = u_x(S) + |S|$$

Then, for any internal node $i \in \nodes$ and any multilevel allocation $\pi$, we define a useful alternative utility function $\beta_i: \Pi \rightarrow \mathbb{R}_{\geq 0}$ as follows:  
\[
\beta_i(\pi) =
\begin{cases}
\sum_{j \in \children(i)} \beta_j(\pi) & \text{if } i \in \internal, \\
\alpha_i(\pi(i)) & \text{otherwise}.
\end{cases}
\]

We can now describe the adapted version of \textbf{SMA}. First, we run \textbf{SMA}$(1, \pi, \beta, \Psi)$, with $\pi(1) = \items$, and $\pi(i) = \emptyset, \forall i \in \nodes \setminus \{1\}$, where $\beta = (\beta_i)_{i \in \nodes}$. This produces a multilevel allocation in which as many chores with marginal utility zero as possible are allocated. Then, we allocate each of the remaining unallocated chores sequentially by selecting nodes from the root to the leaves using the loss functions at every internal node. Once all chores are allocated, we claim that the resulting multilevel allocation is complete and $\Psi$-maximizing with respect to $v$. The corresponding pseudocodes are given in Algorithms~\ref{algo:find_leaf_chores} and~\ref{algo:chore SMA}.

\begin{algorithm}[]
\small
\caption{\small Select\_Leaf\_Chores}
\label{algo:find_leaf_chores}
\begin{algorithmic}[1]
    \State \textbf{Input}: $\tree$ - a multilevel tree ; $i \in \nodes$ - a node; $v$ -- valuations of all nodes; $\Psi$ -- fairness criteria of internal nodes
    \State \textbf{Output} : $x \in \leaves(i)$ - a leaf in $\tree_i$
    \If{$\children(i) = \emptyset$} 
    \State \Return $i$ \EndIf
    \If{$i = 1$} 
    \State $N \gets \arg \underset{j \in \children(i) \setminus \{0\}}{\min} \theta_i(\pi, j)$
    \Else 
    \State $N \gets \arg \underset{j \in \children(i)}{\min} \theta_i(\pi, j)$ \EndIf
    \State $j' \gets \max_{j \in N} j$
    \State \Return $\mathtt{Select\_Leaf\_Chores}(\tree, j', v, \Psi)$
\end{algorithmic}
\end{algorithm}

\begin{algorithm}[] \small
\caption{\small SMA for Chores}
\label{algo:chore SMA}
\begin{algorithmic}[1]
    \State \textbf{Input:} $i$ -- a node in $\nodes$; $\pi $ -- a multilevel allocation; $v$ -- valuations of all nodes; $\Psi$ -- fairness criteria of internal nodes
    \State \textbf{Output:} $\pi $ -- a multilevel allocation
    \State Run \textbf{SMA}$(1, \pi, \beta, \Psi)$
    \State $\mathcal{RI} \gets \pi(1) \setminus \cup_{i \in \children(1)} \pi(i)$ \Comment{Set of remaining items}
    \While{$\mathcal{RI} \neq \emptyset$}
        \State $i \gets \texttt{Select\_Leaf\_Chores}(\tree, 1, v, \Psi)$
        \State Let $g$ be any unallocated chore, i.e. $g \in \mathcal{RI}$
        \While{$i \neq 1$}
            \State $\pi(i) \gets \pi(i) \cup \{g\}$
            \State $i \gets \parent(i)$
        \EndWhile
        \State $\mathcal{RI} \gets \mathcal{RI} \setminus \{g\}$
    \EndWhile
\end{algorithmic}
\end{algorithm}

\begin{restatable}{theorem}{choresma}
    \textbf{SMA} for Chores computes a multilevel complete and $\Psi$-maximizing allocation in polynomial time.
\end{restatable}

\begin{proof}
    We first note that the proof is simply the multilevel adaptation of the proof of \cite{Viswanathan_Zick2023b}, which shows that their extension of the General Yankee Swap can handle chores. Denote $\pi$ the multilevel allocation returned by Algorithm~\ref{algo:chore SMA}, and call $\pi'$ a complete $\Psi$-maximizing multilevel allocation. We will show that at any internal node $i \in \internal$, we have $\pi \vert_{\children(i)} \succeq^v_{\Psi_i} \pi' \vert_{\children(i)}$, and hence ultimately that $\pi$ is multilevel $\Psi$-maximizing. 

    We first prove this for the root, i.e., for $i=1$  (the proof is identical for lower internal nodes). We know that the first step of Algorithm~\ref{algo:chore SMA} (running SMA with $\beta$) returns a multilevel utilitarian-optimal allocation w.r.t. the $\beta$ functions. We first show that it implies multilevel utilitarian-optimality w.r.t. the true valuations $v$.

    First, notice that $\pi$ is the allocation returned by SMA, in which additional items were allocated. However, those unallocated items were the chores with strictly negative marginal utility (under $v$), i.e. chores that had null marginal utility under $\beta$. Therefore, the allocation of those additional items with null marginal utility w.r.t. the $\beta$ functions does not change the value of the allocation. Hence, we know $\pi$ is multilevel utilitarian-optimal w.r.t $\beta$. We have: 
    \begin{align*}
        \beta_i(\pi) \geq \beta_i(\pi') 
        \Leftrightarrow &\sum_{j \in \children(i)} \beta_j(\pi) \geq \sum_{j \in \children(i)} \beta_j(\pi') \\
        \Leftrightarrow & \sum_{x \in \leaves(i)} \alpha_x(\pi(x)) \geq \sum_{x \in \leaves(i)} \alpha_x(\pi'(x)) \\
        \Leftrightarrow & \sum_{x \in \leaves(i)} u_x(\pi(x)) + |\pi(x)| \geq \sum_{x \in \leaves(i)} u_x(\pi'(x)) + |\pi'(x)| \\
        \Leftrightarrow & \ |\pi(i)| + \sum_{x \in \leaves(i)} u_x(\pi(x)) \geq |\pi'(i)| + \sum_{x \in \leaves(i)} u_x(\pi'(x)) \\
        \Leftrightarrow & \ v_i(\pi) \geq v_i(\pi')
    \end{align*}

    Now, we prove that we have $v_j(\pi) \geq v_j(\pi')$ for all $j \in \children(i)$, and therefore by Pareto-Dominance of our fairness notions, we have $\pi \vert_{\children(i)} 
    \succeq^v_{\Psi_i} \pi' \vert_{\children(i)}$. If $v_j(\pi) \geq v_j(\pi')$ for all $j \in \children(i)$, the claim follows. Let us assume for contradiction that there exists $j \in \children(i)$ such that $v_j(\pi) < v_j(\pi')$, i.e. there exists a child of $i$ who received more chores in $\pi$ than in $\pi'$. If there are multiple such node, pick the one with the least $\theta_i(\pi, j)$. Since $v_i(\pi) = \sum_{j \in \children(i)} v_j(\pi) \geq v_i(\pi') = \sum_{j \in \children(i)} v_j(\pi')$, it must be that there exists a node $k \in \children(i)$ such that $v_k(\pi) > v_k(\pi')$. Since utilities are non-positive, it must be that $0 \geq v_k(\pi) > v_k(\pi')$, and since $v_k(\pi') < 0$, there must exist a chore $g \in \pi'(k)$ such that one of its leaves $x \in \leaves(k)$ has $\Delta^{u_x}(\pi'(x) \setminus \{g\}, g) = -1$. 

    Let $\pi''$ be the same allocation than $\pi'$, but move chore $g$ from $\pi'(k)$, and the bundle of any of the descendants of $k$ who got chore $g$, to the bundle of a leaf $y \in \leaves(j)$ (and by extension to the bundle of any of the ancestors in $\ancestors(y)$. If $\Delta^{u_y}(\pi'(y), g) = 0$, then $\pi'' \vert_{\children(i)} \succ^v_{\Psi_i} \pi' \vert_{\children(i)}$, a contradiction with $\pi'$ being $\Psi$-max. So, assume $\Delta^{u_y}(\pi'(y), g) = -1$. 

    We show next that $\theta_i(\pi', j) \leq \theta_i(\theta'', k)$, which implies $\pi'' \vert_{\children(i)} \succ^v_{\children(i)} \pi' \vert_{\children(i)}$, by consistency, which would yield a contradiction. Let $\pi_1$ be the allocation maintained by Algorithm~\ref{algo:chore SMA} at the start of the iteration where $j$ received its final chore. Note that such iteration must exist since $v_j(\pi) < v_j(\pi') \geq 0$. Since $j$ was chosen at this iteration, it must be that $$\theta_i(\pi_1, j) \leq \theta_i(\theta_1, k). \text{ If equality holds, then $j < k$.}$$

    Furthermore, we have $$\theta_i(\pi', j) \leq \theta_i(\pi_1, j) \leq \theta_i(\pi_1, k) \leq \theta_i(\pi'', k)$$ where the first inequality comes from $v_j(\pi_1) = v_j(\pi) + 1 \leq v_j(\pi')$, and the last inequality stems from $v_k(\pi_1) \geq v_k(\pi) \geq v_k(\pi') + 1 = v_k(\pi'')$. If any inequality is strict, then $\theta_i(\pi', j) < \theta_i(\pi'', k)$, and therefore we have $\pi'' \vert_{\children(i)} \succ^v_{\Psi_i} \pi' \vert_{\children(i)}$, a contradiction. Otherwise, we have $\theta_i(\pi_1, j) = \theta_i(\pi_1, k)$, and therefore $j < k$. Hence, $g$ should have been allocated to $g$ rather than $k$, and therefore $\pi'' \vert_{\children(i)} \succ^v_{\Psi_i} \pi' \vert_{\children(i)}$, again a contradiction. Hence, it must be that $v_j(\pi) \geq v_j(\pi')$  for all $j \in \children(i)$. This argument applies to any $i \in \internal$. Hence, for any $i \in \internal$, we proved $\pi \vert_{\children(i)}$ is $\Psi_i$-maximizing, and therefore $\pi$ is multilevel $\Psi$-maximizing. 
\end{proof}

\vspace{0.2cm}
Note that the \textbf{MGYS} can also be extended following the same technique: it suffices to run \textbf{MGYS} instead of \textbf{SMA} in Line~3 of Algorithm~\ref{algo:chore SMA}. Since \textbf{MGYS} is a heuristic, it does not provide formal guarantees beyond (trivial) completeness, but it will run faster than the \textbf{SMA} for Chores. Note that Lines 4-13 will allocate exactly the same chores to the same nodes both in the \textbf{SMA} for Chores and the \textbf{MGYS} for Chores. Indeed, Line 3 (whether it is running \textbf{SMA} or \textbf{MGYS}) will allocate as many 0-valued chores as possible, and the remaining non-allocated chores are all -1-valued. Hence, the allocation computed by either \textbf{SMA} or \textbf{MGYS} will yield 0 utility for all nodes, and therefore each chore-version algorithm will select the very same leaf and item at every iteration to complete the allocation (since they are guided by the same loss functions).

\end{document}